\documentclass[11pt]{article}

\usepackage{fullpage, amsmath, amssymb, amsthm, enumerate, color}
\usepackage{nicefrac}
\usepackage{algorithm}
\usepackage{multicol}
\usepackage[noend]{algpseudocode}
\usepackage{graphicx, tikz, algorithmicx, natbib}
\usepackage{dsfont}

\emergencystretch=1em
\newtheorem{theorem}{Theorem}[section]
\newtheorem{proposition}[theorem]{Proposition}
\newtheorem{lemma}[theorem]{Lemma}
\newtheorem{claim}[theorem]{Claim}
\newtheorem{corollary}[theorem]{Corollary}
\newtheorem{definition}[theorem]{Definition}
\newtheorem{remark}[theorem]{Remark}
\usepackage{authblk}

\def\P{\mathbb{P}}

\def\E{\mathbb{E}}

\def\argmax{\text{argmax}}

\begin{document}

\title{Maximum Weight Online Matching with Deadlines\footnote{This paper is the result of the merger of three manuscripts: 
(i) Maximizing Efficiency in Dynamic Matching Markets
by Itai Ashlagi, Maximilien Burq, Patrick Jaillet, and Amin Saberi, available at https://arxiv.org/abs/1803.01285, (ii) Online Matching in a Ride-Sharing Platform
by Chinmoy Dutta and Chris Sholley, available at https://arxiv.org/abs/1806.10327, and (iii) Maximum Weight Matching in Random Permutation Model, by Itai Ashlagi, Maximilien Burq, and Amin Saberi, unpublished manuscript (2018). }}

\author[1]{Itai Ashlagi \thanks{iashlagi@stanford.edu}}
\author[2]{Maximilien Burq \thanks{mburq@mit.edu}}
\author[3]{Chinmoy Dutta \thanks{cdutta@lyft.com}}
\author[2]{Patrick Jaillet \thanks{jaillet@mit.edu}}
\author[1]{Amin Saberi \thanks{saberi@stanford.edu}}
\author[3]{Chris Sholley \thanks{chris@lyft.com}}
\affil[1]{MS\&E, Stanford University}
\affil[2]{Operations Research Center, MIT}
\affil[3]{Lyft}

\date{}
\maketitle
\begin{abstract}

We study the problem of matching agents who arrive at a marketplace over time and leave after $d$ time periods. Agents can only be matched while they are present in the marketplace. Each pair of agents can yield a different match value, and the planner's goal is to maximize the total value over a finite time horizon.

First we study the case in which vertices arrive in an adversarial order. We provide  a randomized $\nicefrac{1}{4}$-competitive algorithm building on a  result by \citet{feldman2009online} and \citet{lehmann2006combinatorial}. We extend the model to the case in which  departure times are drawn independently from a distribution with non-decreasing hazard rate, for which we establish a $\nicefrac{1}{8}$-competitive algorithm.

When the arrival order is chosen uniformly at random, we show that a batching algorithm, which computes a maximum-weighted matching every $(d+1)$ periods, is $0.279$-competitive.

%
\end{abstract}
\pagebreak

\section{Introduction}

Traffic congestion is a severe problem in metropolitan areas around the world. A resident in Los Angeles is estimated to lose around \$6,000  per year due to spending extra hours in traffic (Economist 2014). This does not account for  extra carbon emissions.  A couple of  ways to  relieve congestion are  pricing \citep{vickrey1965pricing}  and carpooling, and online platforms and other technological advances are now available to assist with these tasks \citep{ostrovsky2018carpooling}.\footnote{\cite{ostrovsky2018carpooling} discusses  the complementarities between autonomous vehicles, carpooling and pricing.} 

Online platforms now offer the option to share rides. An immediate benefit is that passengers who share rides pay a lower price for the trip. However, the passenger may also experience a disutility from additional waiting, detours, and less privacy. Facing these trade-offs, ride-sharing  platforms and carpooling applications seek to increase the volume of ride-sharing, which will in turn  help in reducing congestion. 

In this paper,  we present and study  a graph-theoretic matching problem that captures the following three key features faced by ride-sharing platforms. First is spatial; the farther away two passengers are from each other, the higher the disutility from being matched. Second is temporal;  passengers cannot match passengers who  request rides at very different times. Third, the platform faces uncertainty about  future demand.

\subsection*{Contributions}

Next we describe our basic graph-theoretic model and contributions. Time is discrete and one vertex of a given graph arrives at each time period. Every edge has a non-negative weight, representing the reward from matching these two vertices. A vertex cannot match more than $d$ periods after its arrival; after $d$ units of time the vertex becomes critical and departs. It is  helpful to think of $d$ as a service quality set by the platform and a passenger is assigned to a single ride after waiting for $d$ periods of time.

The goal is to find a weighted matching with a large total weight in an online manner. This means that the decision for every vertex has to be made no later than $d$ periods after its arrival (this differs from the classic online bipartite matching  literature, in which $d=0$). There is no a priori  information about weights  or arrival times and the  underlying graph may be arbitrary and hence {\it non-bipartite}.

Our first results are given in a setting, in which the vertices arrive in an {\bf adversarial order}.  We introduce for this setting a $\nicefrac{1}{4}$-competitive algorithm, termed {\it Postponed Greedy} (PG). We further show that no algorithm  achieves a competitive ratio that is higher than $\nicefrac{1}{2}$.

The key idea behind PG is  to look at a virtual bipartite graph, in which each vertex is duplicated into a ``buyer" and a ``seller" copy. We enforce that the seller copy does not match before the vertex becomes critical. This enables us to postpone the matching decision until we have more information about the graph structure and the likely matchings. We then proceed in a manner similar to \cite{feldman2009online}: tentatively match each new buyer copy to the seller that maximizes its margin, i.e., the difference between edge weight, and the value of the seller's current match. 

We extend the model to the case where the departure of vertices are determined stochastically. We show that when the departure distribution is memoryless and realized departure times are  revealed to the algorithm just as becoming critical, one can adapt the PG algorithm to achieve a competitive ratio of $\nicefrac{1}{8}$. It is worth noting that when departure times are chosen in an adversarial manner no algorithm can achieve a constant competitive ratio.

Next we study the setting in which vertices arrive in a {\bf random order}. We analyze  a {\it batching} algorithm which, every $d+1$ time steps, computes a maximum weighted matching among the last $d+1$ arrivals. Vertices that are left unmatched are discarded forever. We show that when the number of vertices is sufficiently large, batching is $0.279$-competitive. 

The analysis proceeds in three steps. First, we show that the competitive ratio is bounded by the solution to a graph covering problem. Second, we show how a solution for small graphs can be extended to covers for larger graphs. Finally, we establish a reduction that allows us to consider only a finite set of values for $d$. We conclude with a computer-aided argument for graphs in the finite family.


\subsection*{Related literature}

There is a growing literature related to  ride-sharing.  \citet{S14} finds that about $80\%$ of rides in Manhattan could be shared by two passengers. Many studies focus on  rebalancing or dispatching problems without pooling, e.g., \citet{PSFR12, ZP14, S14, SSGF16, banerjee2018state}.  \citet{ASWFR17} studies real-time high-capacity ride-sharing.  It does not consider, however, a graph-theoretic online  formulation for matching rides.

This paper is closely related to the  online matching literature.  In the classic problem, introduced in \citet{kvv}, the graph is bipartite with  vertices on one side waiting, while others are arriving sequentially and  have to be matched {\it immediately} upon arrival. This work has numerous extensions, for example to stochastic arrivals and in the  adwords context \cite{mehta2007adwords,aryanak_randominput,aryanak_stmatching,mos,STMatchingPatrick}. See \cite{mehta2013online} for a detailed survey. Our contributes to this literature in three ways. First, we provide algorithms that perform well on edge-weighted graphs. Second, our graph can be non-bipartite, which is the case in ride-sharing and kidney exchange. Third, all vertices can arrive over time and may remain for some given time until they are matched. Closely related is \citet{huang2018match}, which studies a similar model to ours in the non-weighted case, but allow departure times to be adversarial. 

Several  papers consider the problem of dynamic matching in the edge-weighted case. \citet{feldman2009online} find that in the classic online bipartite setting, no algorithm achieves a constant approximation. They introduce a  {\it free disposal} assumption, which allows to discard a matched vertex in favor of a new arriving vertex. They find, based on an algorithm by \citet{lehmann2006combinatorial}, that a  greedy algorithm that matches a vertex to the highest marginal vertex, is $0.5$-competitive. We build on this result for a special classes of bipartite graphs.   In the adversarial setting \citet{emek2016online,ashlagi2017min} study the problem of minimizing the sum of distances between matched vertices and the sum of their  waiting times. In their model no vertex leaves unmatched and our model does not account for vertices' waiting times. Few  papers  consider the stochastic environment \citep{baccara2015optimal,ozkan2016dynamic,hu2016dynamic}.  These papers find that some waiting before matching  is beneficial for improving efficiency. 

Related to our work are some papers on job or packet scheduling. Jobs arrive over online to a buffer, and reveal upon arrival the deadline by which they need to be scheduled. The algorithm  can schedule at most one job per time and the value of scheduling a job is independent from the time slot. Constant approximation algorithms are given by \citet{chin2006online} and \citet{li2005optimal}.

Finally, there is a growing  literature that focuses on dynamic matching motivated from  kidney exchange \citep{Utku,anderson2015dynamic,dickerson2013failure,ashlagi2017matching}. These  papers focus mostly  on random graphs with no weights. Closer to our paper is \cite{akbarpour2017thickness}, which finds that in a sparse random graph, knowledge about the departure  time of a vertex is beneficial and matching a vertex  only when it becomes  critical  performs well.  Our work differs from these papers in two ways: we consider the edge-weighted case, and, we make no assumption on the graph structure.


\section{Model}

Consider a weighted graph $G$ with  $n$ vertices indexed by $i=1,\ldots n$. Vertices arrive  sequentially over $n$ periods and let $\sigma(i)$ denote the arrival time of vertex $i$. Let $v_{ij}\geq 0$ denote the  weight on the undirected edge $(i,j)$ between vertices $i$ and $j$.

For vertices $i$ and $j$ with $\sigma(i) < \sigma(j)$, the weight $v_{ij}$ on the edge between $i$ and $j$ is observed only after  vertex $j$ has arrived. 

For $d \geq 1$, the {\bf online graph with deadline $d$}, denoted by $G_{d,\sigma}$, has the same vertices as $G$, and the edge between $i$ and $j$ in $G$ exists if an only if  $|\sigma(i)-\sigma(j)|\leq d$.    We say that $i$ becomes {\bf critical} at period $\sigma(i)+d$, at which time the online algorithm needs to either match it and collect the associated edge weight, or let it {\it depart} from the graph. 


We will consider two settings regarding how arrivals are generated. In the Adversarial Order (AO) setting, we assume that $\sigma(i) = i$. In the Random Order (RO) setting, we assume that $\sigma$ is sampled uniformly at random among all possible permutations $S_n$ of $[1,n]$.

The goal is to find an online algorithm that generates a matching with high total weight. More precisely, we seek to design a randomized online algorithm that obtains in expectation a high fraction of the expected maximum-weight of a matching over $G_{d, \sigma}$.


To illustrate a natural tradeoff, consider the example in  Figure \ref{fig:hard:basic} for $d=1$. At period $2$ the planner can either match vertices $1$ and $2$ or let vertex $1$ remain unmatched. This simple example shows that no deterministic algorithm can obtain a constant competitive ratio. Furthermore, no algorithm can achieve a competitive ratio higher than $\nicefrac{1}{2}$. 

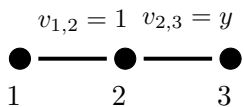
\begin{figure}[H]
  \centering
  \begin{tikzpicture}[pre/.style={<-,shorten <=1.5pt,>=stealth,thick}, post/.style={->,shorten >=1pt,>=stealth,thick}, scale=0.7]
  \tikzstyle{every node}=[draw,shape=rectangle,minimum size=5mm, inner sep=0];
  \tikzstyle{edge} = [draw,thick,-]
  \tikzstyle{every node}=[shape=circle,minimum size=8mm, inner sep=0];

  \draw [fill](-2,0) circle [radius=0.2];
  \node [right] at (-2.7,-0.7) {$1$};
  \draw [fill](0,0) circle [radius=0.2];
  \node [right] at (-0.7,-0.7) {$2$};
  \draw [fill](2,0) circle [radius=0.2];
  \node [right] at (1.3,-0.7) {$3$};

  \draw[line width=1.5pt] [-] (-2+0.35, 0) .. controls(-1,0) ..(0-0.35, 0);
  \node [right] at (-1.7,0.7) {\small $v_{1,2} = 1$};
  \draw[line width=1.5pt] [-] (0+0.35, 0) .. controls(1,0) .. (2-0.35, 0);
  \node [right] at (0.3,0.7) {\small $v_{2,3} = y$};
  \end{tikzpicture}
  \caption{Let $d = 1$. Therefore, there is no edge between vertices $1$ and $3$. The algorithm needs to decide whether to match $1$ with $2$ and collect $v_{1,2}$ without knowing $y$.}
  \label{fig:hard:basic}
\end{figure}


\section{Adversarial arrival order}
\label{sec:adverse}

The example in Figure \ref{fig:hard:basic} illustrates a necessary condition for the algorithm to achieve a constant competitive ratio: with some probability, vertex $2$ needs to forgo the match with vertex $1$. We  ensure this property  by assigning every vertex to be either a {\it seller} or a {\it buyer}. We then prevent sellers from matching before they become critical, while we allow buyers to be matched at any time.

It will be useful to first study a special case, in which the underlying graph $G$ is  bipartite,  with sellers on one side and buyers and in the online graph a buyer and a seller cannot match if the buyer arrives before the seller.  For such  online graphs we show that a greedy algorithm given by \citet{feldman2009online} is  0.5-competitive. We then build on this algorithm to design a randomized $\nicefrac{1}{4}$-competitive algorithm for arbitrary graphs.

\subsection{Bipartite constrained online graphs}
\label{sec:bipartite}

Let $G$ be a bipartite graph and $\sigma$ be the order of arrivals. The online graph $G_{d,\sigma}$ is called {\bf constrained bipartite} if for every seller  $s$ and every buyer $b$, there is no edge between $s$ and $b$ if $\sigma(b)<\sigma(s)$, i.e. $b$ and $s$ cannot match if $b$ arrives before $s$.

Consider the following greedy algorithm, which attempts to match buyers in their arriving order. An arriving buyer $b$ is matched to the seller with the highest marginal value if the marginal value is positive. If the seller is already matched to another buyer $b'$, $b'$ becomes unmatched and never gets matched again. Formally:
\begin{algorithm}
\caption{Greedy algorithm \citep{feldman2009online}}
\begin{itemize}
\item Input: constrained bipartite graph,  $G_{d,\sigma}$.
\item For each arrival $i=1,\ldots,n$:
\begin{itemize}
\item If $i$ is a seller, initialize $p(i)=0$, and $m(s) = \emptyset$.
\item If $i$ is a buyer:
\begin{itemize}
\item Set $s\in \argmax_{s'\in S}\{v_{is'}-p(s')\}$.
\item If $v_{is}-p(s)>0$,  set $m(s)=i$ and set $p(s)=v_{is}$.
\end{itemize}
\end{itemize}
\item When a seller $s$ becomes critical: match it to $b = m(s)$ if $m(s) \neq \emptyset$.
\end{itemize}
\end{algorithm}

\begin{proposition}[\citet{feldman2009online}]
\label{prop:factor2}
The greedy algorithm is 0.5-competitive for online bipartite constrained graphs.
\end{proposition}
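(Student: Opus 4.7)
The plan is to adapt the price-based analysis of \citet{lehmann2006combinatorial}, which \citet{feldman2009online} deploys in a closely related online setting. Writing $p_b(s)$ for the price of seller $s$ just before buyer $b$ is processed and $p_{\textup{final}}(s)$ for the price at termination, two bookkeeping identities drive the argument. First, the algorithm's total value equals $\sum_s p_{\textup{final}}(s)$, since each surviving match contributes exactly $v_{m(s),s} = p_{\textup{final}}(s)$. Second, letting
$$\Delta_b := \max\Bigl(0,\ \max_{s' \in S}\bigl\{v_{b s'} - p_b(s')\bigr\}\Bigr)$$
be the increment produced when $b$ is processed, a telescoping argument across each seller's sequence of price replacements yields $\sum_b \Delta_b = \sum_s p_{\textup{final}}(s) = \textup{ALG}$. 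Finally, $p(s)$ is non-decreasing in time, since the algorithm only replaces it when the new value is strictly larger.

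Next, I would fix an offline maximum matching $M^*$ on $G_{d,\sigma}$ and charge its weight to twice the algorithm's value. For any $(b^*, s^*) \in M^*$, the constrained bipartite assumption forces $\sigma(s^*) \leq \sigma(b^*) \leq \sigma(s^*) + d$, so $s^*$ is present in the market when $b^*$ arrives and hence is a legitimate candidate for greedy's argmax. The greedy choice therefore gives
$$v_{b^*, s^*} - p_{b^*}(s^*) \;\leq\; \Delta_{b^*},$$
and monotonicity upgrades $p_{b^*}(s^*)$ to $p_{\textup{final}}(s^*)$, yielding $v_{b^*, s^*} \leq \Delta_{b^*} + p_{\textup{final}}(s^*)$. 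Summing over the (distinct) buyers and sellers of $M^*$,
$$\textup{OPT} = \sum_{(b^*, s^*) \in M^*} v_{b^*, s^*} \;\leq\; \sum_{b^*} \Delta_{b^*} + \sum_{s^*} p_{\textup{final}}(s^*) \;\leq\; 2\,\textup{ALG},$$
which is the desired competitive bound.

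The conceptual core is a double-counting argument: each optimal edge is paid for once by a buyer-indexed quantity and once by a seller-indexed quantity, and both sums are bounded by $\textup{ALG}$. The step that requires care, and that keeps the result from extending directly to general online bipartite graphs, is the greedy-choice inequality for every $(b^*, s^*) \in M^*$: it needs $s^*$ to have already arrived and not yet departed when $b^*$ is processed, which is precisely what the constrained bipartite hypothesis guarantees. The buyer/seller duplication used in the following subsection for arbitrary graphs is engineered to manufacture exactly this property on a virtual bipartite instance, so that Proposition~\ref{prop:factor2} can be invoked as a black box.
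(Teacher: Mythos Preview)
Your argument is correct and matches the paper's approach. The paper itself defers Proposition~\ref{prop:factor2} to \citet{feldman2009online}, but the explicit proof it gives for Theorem~\ref{th:factor4} is precisely your bookkeeping: their buyer margin $q(b)$ is your $\Delta_b$, the identity $\sum_b q(b) = \sum_s p^f(s)$ is your telescoping sum, and the inequality $v_{b,s} \le q(b) + p^f(s)$ is phrased there as feasibility for the dual of the matching LP rather than as a charge against a fixed optimal matching $M^*$, which is an equivalent final step.
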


\citet{feldman2009online} prove that  this algorithm is 0.5-competitive for an online matching problem with {\it free disposal}. In their setting all seller exists and buyer arrive one at a time. The algorithm provides the same guarantees for constrained bipartite graph since, by construction, there is no harm in assuming that all sellers exist rather than  arriving over time.  The key behind the proof is that the value $p(s)$ function  for each seller $s$ is  submodular. In fact the result is a special case of a result by \citet{lehmann2006combinatorial}, who study combinatorial auctions with submodular valuations.

\subsection{Arbitrary graphs}
\label{sec:arbitrary graphs}

In this section we extend the greedy algorithm for constrained bipartite graphs to arbitrary graphs.
A naive way to generate a online constrained bipartite graph from an arbitrary one is to randomly assign each vertex to be  either a seller or a buyer, independently and with probability \nicefrac{1}{2}. Then only keep the edges between each buyer and all the sellers who arrived before her. Formally:

\begin{algorithm}
\caption{Naive Greedy}
\begin{itemize}
\item Input: an online graph with deadline $d$, $G_{d,\sigma}$.
\item For each vertex $t=1,\ldots, n$:
\begin{enumerate}
\item[]  Toss a fair coin to decide whether $i$ is a  \emph{seller} or a \emph{buyer}. Construct the online constrained bipartite graph $\tilde{G}(d,\sigma)$  by keeping only the edges between each buyer and the sellers who arrived  before her.
\end{enumerate}
\item Run the Greedy algorithm on $\tilde{G}(d,\sigma)$.
\end{itemize}
\end{algorithm}

\begin{corollary}
  \label{cor:factor8}
  The naive greedy  algorithm is $\nicefrac{1}{8}$-competitive for  arbitrary online graphs.
\end{corollary}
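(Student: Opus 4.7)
The proof is a direct randomized reduction. My plan is to fix the optimal offline matching $M^*$ on $G_{d,\sigma}$, track how much of its weight survives the random bipartition into $\tilde{G}(d,\sigma)$, and then invoke Proposition \ref{prop:factor2} on $\tilde{G}(d,\sigma)$ for every realization of the coin tosses.

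First, I will analyze what happens to a fixed edge $(i,j) \in M^*$ with, say, $\sigma(i) < \sigma(j)$. Since the algorithm retains $(i,j)$ in $\tilde{G}(d,\sigma)$ exactly when $i$ is assigned to be a seller and $j$ is assigned to be a buyer, and the two coin tosses are independent and fair, this event occurs with probability $\nicefrac{1}{4}$. By linearity of expectation over the edges of $M^*$, the total weight of $M^*$ that survives in $\tilde{G}(d,\sigma)$ has expectation $\tfrac{1}{4} w(M^*) = \tfrac{1}{4} \mathrm{OPT}(G_{d,\sigma})$. Since any subset of $M^*$ is still a matching, this quantity is a lower bound on the weight of a maximum matching in $\tilde{G}(d,\sigma)$, so
\begin{equation*}
\E\bigl[\mathrm{OPT}(\tilde{G}(d,\sigma))\bigr] \;\geq\; \tfrac{1}{4}\,\mathrm{OPT}(G_{d,\sigma}).
\end{equation*}

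Next, I will observe that conditional on any realization of the seller/buyer labels, $\tilde{G}(d,\sigma)$ is by construction a constrained bipartite online graph (edges go only between a seller and a later-arriving buyer). Proposition \ref{prop:factor2} therefore applies deterministically in each realization, giving
\begin{equation*}
\mathrm{Greedy}(\tilde{G}(d,\sigma)) \;\geq\; \tfrac{1}{2}\,\mathrm{OPT}(\tilde{G}(d,\sigma)).
\end{equation*}
Taking expectations over the bipartition and chaining with the previous inequality yields $\E[\mathrm{Greedy}(\tilde{G}(d,\sigma))] \geq \tfrac{1}{8}\,\mathrm{OPT}(G_{d,\sigma})$, which is the claimed bound.

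I don't anticipate a serious obstacle: the only subtle point is confirming that the algorithm is allowed to run Greedy on $\tilde{G}(d,\sigma)$ in an online fashion, which is fine because the seller/buyer label of each vertex is drawn at its arrival time and the constrained bipartite structure is revealed causally. Everything else is linearity of expectation and applying Proposition \ref{prop:factor2} pointwise.
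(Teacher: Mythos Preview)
Your proof is correct and follows essentially the same approach as the paper: observe that each edge of $G_{d,\sigma}$ survives into $\tilde G(d,\sigma)$ with probability $\nicefrac{1}{4}$, so the optimum on $\tilde G$ is at least a quarter of the optimum on $G_{d,\sigma}$ in expectation, and then invoke Proposition~\ref{prop:factor2} per realization. The paper's proof is terser but makes exactly these two moves.
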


Observe that for vertices $i$, $j$ with $\sigma(i) < \sigma(j)$, edge $(i,j)$ in the original graph remains in the generated constrained bipartite graph with probability $1/4$ (if $i$ is a seller and $j$ is a buyer). We then use proposition \ref{prop:factor2} to prove that  naive greedy is $\nicefrac{1}{8}$-competitive.

One source of inefficiency in the naive greedy algorithm is that the decision whether a vertex becomes  a seller or a buyer is done independently at random and without taking the graph structure into consideration. We next introduce  the   {\it Postponed Greedy} algorithm that defers these decisions  as long as possible in order to construct the constrained bipartite graph more carefully.

When a vertex $k$ arrives, we add two copies of $k$ to a virtual graph: a seller $s_k$ and a buyer $b_k$. Let $S_t$ and $B_t$ be the set of sellers and buyers at arrival time $t$. On arrival, seller $s_k$ does not have any edges, and buyer $b_k$ has edges towards any vertex $s_l \in S_k$ with value $v_{l,k}$. Then we run the greedy algorithm with the virtual graph as input. When  a vertex $k$ becomes critical, $s_k$ becomes critical in the virtual graph, and we compute its matches generated by greedy.

Both $s_k$ and $b_k$ can be matched in this process.  If we were to honor both matches, the outcome would correspond to a 2-matching, in which each vertex has degree at most 2. Now observe that because of the structure of the constrained bipartite graph, this 2-matching does not have any cycles; it is just a collection of disjoint paths. We decompose each path into two disjoint matchings and choose each matching with probability  $1/2$.

In order to do that, the algorithm must determine, for each original vertex $k$, whether the virtual buyer $b_k$ or virtual seller $s_k$ will be used in the final matching. We  say that $k$ is a \emph{buyer} or \emph{seller} depending on which copy is used. We say that vertex $k$ is \emph{undetermined} when the algorithm has not yet determined which virtual vertex will be used. When an undetermined vertex becomes critical, the algorithm  flips a fair coin to decide whether to match according to the buyer or seller copy. This decision is then propagated to the next vertex in the 2-matching: if $k$ is a \emph{seller} then the next vertex will be a \emph{buyer} and vice-versa.
That ensures that assignments are correlated and saves a factor $2$ compared to uncorrelated assignments in the naive greedy algorithm.

\begin{algorithm}[ht]
\caption{Postponed Greedy (PG)}
\label{alg:QDDA}

\begin{itemize}
\item Input: an online graph with deadline $d$, $G_{d,\sigma}$.


\item Process events at time $t$ in the following way:
\begin{enumerate}
  \item \emph{Arrival of a vertex $k$:}
  \begin{enumerate}
  	\item Set $k$'s status to be \emph{undetermined}.
    \item \emph{Add a virtual seller:} $S_t \leftarrow S_{t-1} \cup \{s_k\}$, $p(s_k) \leftarrow 0$ and $m(s_k) = \emptyset$.
    \item \emph{Add a virtual buyer:} $B_t \leftarrow B_{t-1} \cup \{b_k\}$.
    \item \emph{Find a virtual seller for the virtual seller:} $s = \argmax_{s' \in S_t} v_{s', b_k} - p(s')$. 
    \item \emph{Match if marginal utility is positive:} If $v_{s, b_k} - p(s) > 0$, then tentatively match $b_k$ to $s$ by setting $m(s) \leftarrow b_k$ and $p(s) \leftarrow v_{s, b_k}$.
  \end{enumerate}
  \item \emph{Vertex $k$ becomes critical:}
  \begin{enumerate}
  	\item \emph{Proceed if no match found:} If $m(s_k) = \emptyset$, do nothing.
    \item \emph{match in the virtual graph:} If $m(s_k)= b_l$.
Set $S_t \leftarrow S_t \setminus \{s_k\}$, and $B_t \leftarrow B_t \setminus \{b_l\}$. 
    \item If $k$'s status is \emph{undetermined}, w.p $\nicefrac{1}{2}$ set it to be either \emph{seller} or \emph{buyer}.
    \begin{enumerate}
      \item\label{step:seller}\emph{If $k$ is a seller:} finalize the matching of $k$ to $l$ and collect the reward $v_{k,l}$. Set $l$'s status to be a buyer.
      \item\label{step:buyer}\emph{If $k$ is a buyer:} Set $l$'s status to be a seller.
    \end{enumerate}
  \end{enumerate}
\end{enumerate}
\end{itemize}
\end{algorithm}

\begin{theorem}
  \label{th:factor4}
  The postponed greedy (PG) algorithm is $\nicefrac{1}{4}$-competitive for arbitrary online graphs.
\end{theorem}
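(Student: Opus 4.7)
My plan is to reduce to the bipartite constrained case of Proposition \ref{prop:factor2}. The Postponed Greedy algorithm implicitly runs the constrained-bipartite Greedy sub-routine on a virtual graph $\tilde G$ with vertices $\{s_k, b_k\}$ and edges $(s_i, b_j)$ for every pair with $\sigma(i) < \sigma(j)$ and $|\sigma(i)-\sigma(j)|\leq d$, weighted by $v_{i,j}$. First I would observe that $\tilde G$ admits a feasible matching of weight $\text{OPT}$: every edge $(i,j)$ of an optimal matching $M^*$ on $G_{d,\sigma}$ lifts to $(s_i,b_j)$ in $\tilde G$ with the same weight (taking $\sigma(i)<\sigma(j)$ WLOG). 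By Proposition \ref{prop:factor2}, the greedy sub-routine therefore returns a virtual matching $M_V$ with $w(M_V)\geq \tfrac12\,\text{OPT}$.

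Next I would study the structure of $M_V$ after merging each vertex's two copies into a single real vertex. In this 2-matching on the real vertices, any $k$ of degree two has both $s_k$ and $b_k$ matched: the $s_k$-edge forces one neighbor to arrive later than $k$, while the $b_k$-edge forces the other to arrive earlier. Hence every connected component is strictly monotonic in $\sigma$, which rules out cycles, and the 2-matching decomposes into vertex-disjoint paths along which any valid seller/buyer labeling must alternate. I would then verify that PG's coin-flip-and-propagate procedure samples one of the two alternating colorings on each path uniformly at random. Because $\sigma$ is monotonic along the path, vertices become critical in path order; only the smallest-$\sigma$ vertex has no path-predecessor to set its status, so it flips a fresh fair coin, and the propagation rule alternates the labels along the path. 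Path-endpoints whose seller copy is unmatched simply terminate the propagation without affecting earlier decisions.

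Finally, a virtual edge $(s_i,b_j)\in M_V$ becomes a real collected match iff $i$ is labeled ``seller'' and $j$ is labeled ``buyer,'' which occurs in exactly one of the two 2-colorings. So each $e\in M_V$ is honored with probability $\tfrac12$, and linearity of expectation yields $\E[\text{ALG}] = \tfrac12\,w(M_V) \geq \tfrac14\,\text{OPT}$. The main obstacle in my view is the middle paragraph: cleanly proving cycle-freeness of the merged 2-matching and arguing that the online, asynchronous coin-flip-plus-propagation actually realizes a uniformly random 2-coloring of each path. Once those structural facts are in hand, the $\tfrac14$ bound follows immediately.
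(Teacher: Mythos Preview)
Your proposal is correct and reaches the same $\nicefrac{1}{4}$ bound, but the packaging differs from the paper's proof in a way worth noting.

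You obtain the first factor of $\nicefrac{1}{2}$ by lifting $M^*$ into the virtual graph $\tilde G$ and then invoking Proposition~\ref{prop:factor2} as a black box to get $w(M_V)\ge \tfrac12\,\text{OPT}$. The paper instead re-runs the primal--dual argument directly on the dual of the \emph{original} offline LP: it takes $\lambda_k = p^f(s_k)+q(b_k)$, checks that this is feasible for \eqref{eq:offline:dual}, and concludes $\text{OFF}\le \sum_k(p^f(s_k)+q(b_k)) = 2\sum_k p^f(s_k)$. Since $\sum_k p^f(s_k)=w(M_V)$, the two routes give the same inequality; yours is more modular, the paper's is more self-contained and avoids introducing $\text{OPT}(\tilde G)$ as an intermediate quantity.

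For the second factor of $\nicefrac{1}{2}$ the paper is terse: it simply asserts that each $k$ is a seller with probability exactly $\nicefrac{1}{2}$ and computes $\text{PG}=\tfrac12\sum_k p^f(s_k)$. It does not re-prove cycle-freeness or the uniform 2-coloring inside the proof; those facts are stated once in the algorithm description preceding the theorem. Your path-decomposition argument (monotonicity of $\sigma$ along components rules out cycles; only the earliest vertex of a component flips a coin; propagation alternates labels) is exactly the justification the paper leaves implicit. So the ``main obstacle'' you flagged is real, your resolution is correct, and it is precisely the step the paper takes for granted. One small simplification: you do not actually need the full uniform-2-coloring statement, only the marginal fact that each fixed $k$ is a seller with probability $\nicefrac12$, which follows by a one-line induction along the propagation chain.
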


\begin{proof}

  Fix a vertex $k$, and denote $p^f(s_k)$ to be the final value of its virtual seller $s_k$'s match. If $k$'s status is a seller in step (2.c.i), then we collect $p^f(s_k)$. Note that this happens with probability exactly $\nicefrac{1}{2}$ for every $k$.
  $$\text{ PG } = \E\left[\sum_{k \text{ is a seller}} p^f(s_k)\right] = \frac{1}{2} \sum_{k \in [1, T]} p^f(s_k).$$

  For a virtual buyer $b$ arriving at time $t$, let $q(b) = \max_{s \in S_t} v_{sb} - p(s)$ be the margin for $b$ in step (1.d). Note that every increase in a virtual seller's price corresponds to a virtual buyer's margin. Using the notation $S = \cup_t S_t$ and $B = \cup_t B_t$, this implies that $\sum_{s \in S} p^f(s) = \sum_{b \in B} q(b)$. 
  
The dual of the offline matching problem linear programs can be written as:
  
\begin{equation}
\begin{array}{ll@{}lll}
\text{minimize}  & \displaystyle\sum\limits_{k \in [1,T]} & \lambda_k  & &\\
\text{subject to}& & v_{kl} &\leq  \lambda_k + \lambda_l  &\forall (k,l) \text{ s.t. } |k - l| \leq d\\
& & \lambda_k &\geq 0.
\end{array}
\tag{Offline Dual}
\label{eq:offline:dual}
\end{equation}

  Let $i$ and $j > i$ be two vertices with $j - i \leq d$. When $j$ arrives, we have $q(b_j) \geq v_{ij} - p(s_i)$. Together with the fact that $p(s)$ increases over time, this implies that $\{p^f(s_k) + q(b_k)\}_{k \in [1, T]}$ is a feasible solution to \eqref{eq:offline:dual}.

  We can conclude that $\text{OFF} \leq \sum_k p^f(s_k) + q(b_k) = 2 \sum_k p^f(s_k) = 4 \text{PG}$.

\end{proof}
\subsection{Alternative algorithm for Greedy: Dynamic Deferred Acceptance}

Observe that the greedy algorithm discards a buyer that becomes unmatched and therefore does not attempt to rematch it.
We  introduce the Dynamic Deferred Acceptance (DDA) algorithm, which takes as input a constrained bipartite graph and returns a matching (formally presented below).
The main idea is to maintain a tentative  maximum-weight matching $m$ at all times during the run of the algorithm. This tentative matching is updated according to an auction mechanism:  every seller $s$ is associated with a \emph{price} $p_s$, which is initiated at zero upon arrival. Every  buyer $b$ that that already arrived and yet to become critical is associated with a \emph{profit margin} $q_b$ which corresponds to the value of matching to their most preferred seller minus the price associated with that seller.
Every time a new buyer arrives, she bids on her most preferred seller at the current set of prices. This triggers a bidding process that terminates when no unmatched buyer can profitably bid on a seller.

A tentative match between a buyer and a seller is realized (and the buyer and seller leave)  only once the seller becomes critical, i.e., she has been present for $d$ time periods and is about to become critical. At that time,  the seller and the buyer are considered matched and depart. This ensures that sellers never get matched before they become critical. A buyer is discarded only if she is unmatched and becomes  critical.

At any point $t$ throughout the algorithm, we maintain a set of sellers $S_t$, a set of buyers $B_t$, as well as a matching $m$, a price $p_s$ for every seller $s \in S_t$, and a marginal profit $q_b$ for every buyer $b \in B_t$.

\begin{algorithm}
\caption{Dynamic Deferred Acceptance}\label{alg:DDA}
  \begin{itemize}
  \item Input: an online graph with deadline $d$, $G_{d,\sigma}$.
  \item Process each event in the following way:
  \begin{enumerate}
    \item\label{step:arrival:seller}\emph{Arrival of a seller s:} Initialize $p_s \leftarrow0$ and $m(s) \leftarrow \emptyset$.
    \item\label{step:arrival:buyer}\emph{Arrival of a buyer $b$:} Start the following {\em ascending auction.} \\
    Repeat
      \begin{enumerate}
         \item Let $q_b \leftarrow \max_{s' \in S_t} v_{s', b} - p_{s'}$ and $s \leftarrow \argmax_{s' \in S_t} v_{s', b} - p_{s'}$.
         \item If $q_b > 0$ then
         \begin{enumerate}
         \item $p_s \leftarrow p_s+\epsilon$.
         \item $m(s)\leftarrow b$ (tentatively match $s$ to $b$)
         \item Set $b$ to $\emptyset$ if $s$ was not matched before. Otherwise, let $b$ be the previous match of $s$.
    	\end{enumerate}
    \end{enumerate}
    Until $q_b \leq 0$ or $b = \emptyset$.
    \item\label{step:departure:seller} \emph{Departure of a seller s:} If seller $s$ becomes critical and  $m(s) \neq \emptyset$, finalize the matching of $s$ and $m(s)$ and collect the reward of $v_{s, m(s)}$.
  \end{enumerate}
\end{itemize}
\end{algorithm}

The ascending auction phase  in our algorithm is similar to the auction algorithm by \cite{bertsekas1988auction}. Prices (for overdemanded sellers) in this auction increase by  $\epsilon$ to ensure termination, and optimality is proven  through $\epsilon$-complementary slackness conditions. For the simplicity of exposition we presented the auction algorithm but for the analysis, we consider the limit $\epsilon \rightarrow 0$ and assume the auction phase terminates with the maximum weight matching. Another way to update the matching is through the  Hungarian algorithm \cite{kuhn1955hungarian}, where  prices are increased simultaneously along an alternating path that only uses edges for which the dual constraint is tight.

The auction phase is always initiated at the existing  prices and profit margins. This, together with the fact that the graph is bipartite, ensures that prices never decrease and and marginal profits never increase throughout the algorithm. Furthermore, the prices and marginal profits of the sellers and buyers that are present in the ``market" form an optimum dual for the matching linear program (see Appendix \ref{app:missing_proofs} for more details).

In Appendix \ref{app:missing_proofs}, we show that DDA is $1/2$-competitive on constrained bipartite graphs. We note that in the case of arbitrary graphs, we can adapt the methodology of Postponed Greedy to DDA to recover a factor $1/4$.

Although the  DDA provides the same theoretical guarantees as greedy, we present it here since it may lead to better results in practice. Loosely speaking it rationalizes a reoptimization-like algorithm by keeping a tentative maximum weighted matching.

\subsection{Lower bounds}
\begin{claim}
  \label{cl:ex:bipartite_constrained}
  When the input is a constrained bipartite  graph:
  \begin{itemize}
    \item[-] No deterministic algorithm can obtain a competitive ratio above $\frac{\sqrt{5} - 1}{2} \approx 0.618$.
    \item[-] No randomized algorithm can obtain a competitive ratio above $\frac{4}{5}$.
  \end{itemize}
\end{claim}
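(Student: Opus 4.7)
The plan is to construct small, parameterized adversarial families of constrained bipartite instances and tune their parameters so that any deterministic (respectively, randomized) online algorithm attains at most the claimed ratio on some instance in the family.

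For the deterministic bound, the starting observation is that any deterministic algorithm must, at some moment (namely when a seller $s$ becomes critical while holding a tentative match of weight $1$), make an irreversible binary decision: either commit that weight-$1$ match, or forfeit $s$. I would build a constrained bipartite instance on $O(1)$ vertices---say two sellers and two buyers, arranged in an upper-triangular manner with an appropriate deadline $d$---in which the adversary controls a single scalar $\alpha$ that is only revealed after this critical moment and that determines the remainder of the graph. In one continuation, the algorithm's commitment blocks a larger offline matching of value $1+\alpha$; in the other continuation, its refusal forfeits the only available reward, giving OPT $=1$ and ALG $=0$ (or a correspondingly small value). Writing the worst-case ratio in each branch as a function of $\alpha$ and equating the two branches leads to $\alpha(1+\alpha)=1$, whose positive solution is $\alpha=(\sqrt{5}-1)/2$.

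For the randomized bound, I would appeal to Yao's minimax principle. The goal is to define a distribution $\mathcal{D}$ over constrained bipartite instances sharing a common prefix such that every deterministic algorithm has expected ratio at most $4/5$ against $\mathcal{D}$. A natural candidate is a two-point mixture of instances that differ only in a single late-revealed edge weight; on the common prefix the algorithm's only relevant degree of freedom is summarized by a scalar (its tentative-match choice at the pivotal critical moment), so the expected-ratio analysis reduces to a one-dimensional optimization whose optimum equals $4/5$. Translating back via Yao, this upper-bounds the competitive ratio of any randomized algorithm.

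The main obstacle in both parts will be preventing the algorithm from using its considerable flexibility---buyer deferment, tentative-match updates, and the ability to postpone commitments across multiple sellers' critical times---to sidestep the adversary's trap. Care will be needed to choose the arrivals and the deadline $d$ so that at the pivotal moment the algorithm really does face an irreversible binary (or scalar) choice with no room to hedge through later revisions. A secondary, more routine subtlety is verifying that the constrained bipartite condition ($\sigma(b)\ge \sigma(s)$ on every seller-buyer edge) is honored in every branch of the construction.
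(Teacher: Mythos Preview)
Your plan is essentially the paper's: a four-vertex constrained bipartite instance (sellers $\{1,2\}$, buyers $\{3,4\}$, $d=2$) with one late-revealed edge weight $x\in\{0,1\}$, tuned so that both branches of the critical decision yield the same ratio. One correction to your deterministic sketch: in the working construction the critical seller holds the \emph{small} edge of weight $\alpha=(\sqrt5-1)/2$, not weight $1$; the weight-$1$ edge belongs to the second seller and remains available if the algorithm refuses. With that swap, committing gives ratio $\alpha/1$ (adversary sets $x=0$, so $\mathrm{OPT}=1$) and refusing gives ratio $1/(1+\alpha)$ (adversary sets $x=1$, so $\mathrm{ALG}=1$ and $\mathrm{OPT}=1+\alpha$), and equating these is exactly your $\alpha(1+\alpha)=1$. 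As literally described, your refusal branch has $\mathrm{ALG}=0$, which would give ratio $0$ and could not be equated to anything positive---so this role swap is not cosmetic.

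For the randomized bound the paper does not go through Yao; it directly parameterizes any randomized algorithm by the probability $p$ of matching at the critical moment and computes $\max_{p}\min_{x} E[\mathrm{ALG}]/\mathrm{OPT}=4/5$ on the same instance with the $\alpha$-edge replaced by weight $1/2$. Your Yao route reaches the same number but needs the usual care about ratio-of-expectations versus expectation-of-ratio; the paper's direct minimax over $p$ sidesteps that technicality.
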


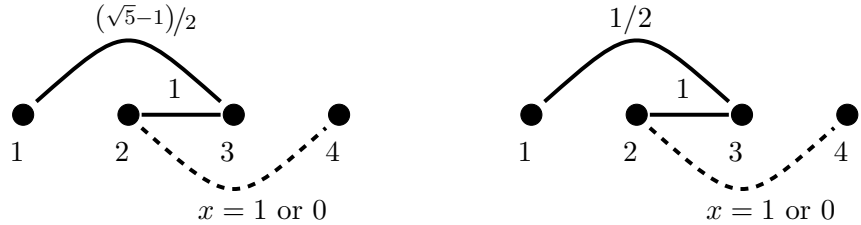
\begin{figure}[!ht]
  \centering
\begin{tikzpicture}[pre/.style={<-,shorten <=1.5pt,>=stealth,thick}, post/.style={->,shorten >=1pt,>=stealth,thick}, scale=0.7]
\tikzstyle{every node}=[draw,shape=rectangle,minimum size=5mm, inner sep=0];
\tikzstyle{edge} = [draw,thick,-]
\tikzstyle{every node}=[shape=circle,minimum size=8mm, inner sep=0];

\draw [fill](-12,0) circle [radius=0.2];
\node [right] at (-12.7,-0.7) {$1$};
\draw [fill](-10,0) circle [radius=0.2];
\node [right] at (-10.7,-0.7) {$2$};
\draw [fill](-8,0) circle [radius=0.2];
\node [right] at (-8.7,-0.7) {$3$};
\draw [fill](-6,0) circle [radius=0.2];
\node [right] at (-6.7,-0.7) {$4$};

\draw[line width=1.5pt] [-] (-10+0.25, 0) .. controls(-9,0) ..(-8-0.25, 0);
\node [right] at (-9.7,0.5) {$1$};
\draw[line width=1.5pt] [-] (-12+0.25, 0.25) .. controls(-10,1.8) ..(-8-0.25, 0.25);
\node [right] at (-10.7,1.8) {$\nicefrac{\left(\sqrt{5} - 1\right)}{2}$};
\draw[dashed, line width=1.5pt] [-] (-10+0.25, -0.25) .. controls(-8,-1.8) .. (-6-0.25, -0.25);
\node [right] at (-8.7,-1.8) {$x = 1$ or $0$};

\end{tikzpicture}
\hspace{1.5cm}
\begin{tikzpicture}[pre/.style={<-,shorten <=1.5pt,>=stealth,thick}, post/.style={->,shorten >=1pt,>=stealth,thick}, scale=0.7]
\tikzstyle{every node}=[draw,shape=rectangle,minimum size=5mm, inner sep=0];
\tikzstyle{edge} = [draw,thick,-]
\tikzstyle{every node}=[shape=circle,minimum size=8mm, inner sep=0];

\draw [fill](-12,0) circle [radius=0.2];
\node [right] at (-12.7,-0.7) {$1$};
\draw [fill](-10,0) circle [radius=0.2];
\node [right] at (-10.7,-0.7) {$2$};
\draw [fill](-8,0) circle [radius=0.2];
\node [right] at (-8.7,-0.7) {$3$};
\draw [fill](-6,0) circle [radius=0.2];
\node [right] at (-6.7,-0.7) {$4$};

\draw[line width=1.5pt] [-] (-10+0.25, 0) .. controls(-9,0) ..(-8-0.25, 0);
\node [right] at (-9.7,0.5) {$1$};
\draw[line width=1.5pt] [-] (-12+0.25, 0.25) .. controls(-10,1.8) ..(-8-0.25, 0.25);
\node [right] at (-10.7,1.8) {$1/2$};
\draw[dashed, line width=1.5pt] [-] (-10+0.25, -0.25) .. controls(-8,-1.8) .. (-6-0.25, -0.25);
\node [right] at (-8.7,-1.8) {$x = 1$ or $0$};
\end{tikzpicture}

\caption{Bipartite graph where $S = \{1, 2\}$ and $B = \{3, 4\}$, with $d = 2$: vertex $1$ becomes critical before $4$ arrives. The adversary is allowed to choose edge $(2,4)$ to be either  $1$ or $0$. Left: instance for the deterministic case. Right: instance for the randomized case.}
\label{fig:ex:constrained}
\end{figure}

\begin{proof}
  {\bf Deterministic algorithm:} Consider the example on the left of Figure \ref{fig:ex:constrained}. When seller $1$ becomes critical, the algorithm either matches her with buyer $3$, or lets $1$ departs unmatched. The adversary then chooses $x$ accordingly. Thus the competitive ratio cannot exceed:
  $$ \max \left(\min_{x \in \{0, 1\}} \frac{\frac{\sqrt{5} - 1}{2} + x}{\max(\frac{\sqrt{5} - 1}{2} + x, 1)}, \min_{x \in \{0, 1\}} \frac{1}{\max(\frac{\sqrt{5} - 1}{2} + x, 1)} \right) = \frac{\sqrt{5} - 1}{2}.$$

  {\bf Randomized algorithm:} Consider the example on the right of Figure \ref{fig:ex:constrained}. Similarly to the deterministic case, when seller $1$ becomes critical, the algorithm decides to match her with $3$ with probability $p$. The adversary then chooses $x$ accordingly. Thus the competitive ratio cannot exceed:
  $$ \max_{p \in [0,1]} \min_{x \in \{0, 1\}} \frac{p(1/2 + x) + (1-p)}{\max(1/2 + x, 1)} = 4/5.$$
\end{proof}

Next we show that our analysis for  PG is tight.
\begin{claim}
  \label{cl:tightness}
  There exists a \emph{constrained} bipartite graph for which PG is $\nicefrac{1}{(4 - 2 \epsilon)}$ -competitive.
\end{claim}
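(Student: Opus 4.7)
My plan is to exhibit a small instance that mirrors the classic tight example for the Lehmann--Feldman greedy algorithm, and then argue that the coin-flip step of PG halves this already-suboptimal value, for a total loss approaching a factor of~$4$. The construction is a constrained bipartite graph with sellers $s_1,s_2$ arriving at times $1,2$ and buyers $b_1,b_2$ arriving at times $3,4$, with deadline $d=3$ so that every seller--buyer pair is an edge of $G_{d,\sigma}$. I will set the edge weights to $v(s_1,b_1)=1-\epsilon+\delta$, $v(s_2,b_1)=1-\epsilon$, $v(s_1,b_2)=1$, and $v(s_2,b_2)=0$, where $\delta>0$ is an arbitrarily small perturbation inserted purely to pin down the tie-break in favour of~$s_1^v$.

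The first step is to trace PG through the four arrivals. Nothing occurs in the virtual graph during the two seller arrivals since their virtual buyer copies carry only zero-weight edges. When $b_1$ arrives, its virtual buyer $b_1^v$ tentatively matches $s_1^v$ (the larger margin by $\delta$), setting $p(s_1^v)=1-\epsilon+\delta$ and $m(s_1^v)=b_1^v$. When $b_2$ arrives, the only positive margin is $\epsilon-\delta$ on~$s_1^v$, so $b_2^v$ displaces~$b_1^v$, and the virtual graph terminates with the single positive match $m(s_1^v)=b_2^v$ of final price $p^f(s_1^v)=1$, while $s_2^v, s_3^v, s_4^v$ remain unmatched.

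The second step is to evaluate $\E[\text{PG}]$ and compare to OPT. Since only vertex~$1$ has a non-empty virtual match, only its coin flip matters: with probability $\nicefrac{1}{2}$ vertex~$1$ is declared a seller and PG collects $v(s_1,b_2)=1$; otherwise it is declared a buyer, vertex~$4$'s status is propagated to seller, but $m(s_4^v)=\emptyset$ so no reward is collected. Hence $\E[\text{PG}]=\nicefrac{1}{2}$. The offline optimum is attained by $\{s_1\text{--}b_2,\; s_2\text{--}b_1\}$ with total value $1+(1-\epsilon)=2-\epsilon$ (letting $\delta\to 0$), so the ratio is $\frac{1/2}{2-\epsilon}=\frac{1}{4-2\epsilon}$, as claimed. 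The only subtlety in the argument is to force the adversarial tie-break inside the greedy step so that $b_1^v$ tentatively matches~$s_1^v$ rather than~$s_2^v$; the perturbation~$\delta$ handles this cleanly and disappears in the limit.
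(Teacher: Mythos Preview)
Your proof is correct and follows essentially the same approach as the paper: exhibit a four-vertex constrained bipartite instance on which the virtual greedy step loses a factor close to~$2$ (one buyer displaces another, wasting a $(1-\epsilon)$ edge) and the coin flip loses another factor of~$2$, yielding $\E[\text{PG}]=\nicefrac{1}{2}$ against an offline value of $2-\epsilon$. The paper's instance uses $d=2$ with edges $v_{1,3}=1-\epsilon$, $v_{2,3}=1$, $v_{2,4}=1$, so the tie-break is automatic and no perturbation~$\delta$ is needed; your instance swaps the roles of the two sellers and takes $d=3$, which is equally valid but requires the~$\delta$ trick to force the desired greedy choice.
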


\begin{figure}[!ht]
  \centering

\def\radius{2.6}
\def \Pointsize {1.4pt}
\begin{tikzpicture}[pre/.style={<-,shorten <=1.5pt,>=stealth,thick}, post/.style={->,shorten >=1pt,>=stealth,thick}, scale=0.7]
\tikzstyle{every node}=[draw,shape=rectangle,minimum size=5mm, inner sep=0];
\tikzstyle{edge} = [draw,thick,-]
\tikzstyle{every node}=[shape=circle,minimum size=8mm, inner sep=0];

\draw [fill](-12,0) circle [radius=0.2];
\node [right] at (-12.7,-0.7) {$1$};
\draw [fill](-10,0) circle [radius=0.2];
\node [right] at (-10.7,-0.7) {$2$};
\draw [fill](-8,0) circle [radius=0.2];
\node [right] at (-8.7,-0.7) {$3$};
\draw [fill](-6,0) circle [radius=0.2];
\node [right] at (-6.7,-0.7) {$4$};

\draw[line width=1.5pt] [-] (-10+0.25, 0) .. controls(-9,0) ..(-8-0.25, 0);
\node [right] at (-9.7,0.5) {$1$};
\draw[line width=1.5pt] [-] (-12+0.25, 0.25) .. controls(-10,1.8) ..(-8-0.25, 0.25);
\node [right] at (-10.7,1.8) {$1 - \epsilon$};
\draw[dashed, line width=1.5pt] [-] (-10+0.25, -0.25) .. controls(-8,-1.8) .. (-6-0.25, -0.25);
\node [right] at (-8.7,-1.8) {$1$};
\end{tikzpicture}

\caption{Bipartite graph where $S = \{1, 2\}$ and $B = \{3, 4\}$, with $d = 2$: vertex $1$ becomes critical before $4$ arrives. Dotted edges represent edges that are not know to the algorithm initially.}
\label{fig:ex:tightness}

\end{figure}
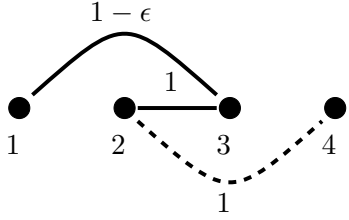

\begin{proof}
Consider the input graph in Figure \ref{fig:ex:tightness}.
Vertex $2$ will be temporarily matched with $3$, and vertex $1$ will depart unmatched. When $2$ becomes critical, with probability $\nicefrac{1}{2}$, she will be determined to be a \emph{buyer} and will depart unmatched. Therefore the PG collects in expectation $\nicefrac{1}{2}$ while the offline algorithm collects $2 - \epsilon$.

\end{proof}

\section{Random arrival order}
In some cases, the vertices can be assumed to come from a distribution that is unknown to the online algorithm. One way to model this is to assume that the adversary chooses the underlying graph, but that the  vertices arrive in random order. 

\subsection{The batching algorithm}

The {\it batching} algorithm computes a maximum-weight matching every $d+1$ time steps.  Every vertex in the matching is then matched, and all other vertices in the batch are discarded.  




\begin{theorem}
  \label{th:competitive_ratio}
	Batching is $\left(0.279 + O(1/n)\right)$-competitive.
\end{theorem}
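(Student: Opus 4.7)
The plan is to follow the three-step reduction sketched in the introduction and then finish with a numerical LP computation. Write $\alpha_{d,n}$ for the worst-case ratio $\E[\mathrm{BATCH}]/\E[\mathrm{OPT}]$; the theorem asks for $\liminf_n \alpha_{d,n} \geq 0.279$ uniformly in $d$. First I would fix an edge-weighted graph $G$ and a maximum-weight matching $M^*$ in $G_{d,\sigma}$. For an edge $(i,j) \in M^*$ at arrival-distance $k = |\sigma(i) - \sigma(j)| \leq d$, both endpoints fall in the same batch with probability $(d{+}1{-}k)/(d{+}1)$, so an edge-by-edge argument already yields a factor $1/2$ in the $n \to \infty$ limit. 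The improvement to $0.279$ must exploit the observation that when an $M^*$-edge splits across two batches, BATCH is not forced to forfeit its weight: the separated endpoints can re-match to other vertices inside their respective batches. Formulating this systematically produces a graph-covering LP in which, for every local configuration of $M^*$-edges around a batch, one must exhibit an alternate feasible within-batch matching whose weight dominates the appropriate fraction of the OPT weight lost to boundary splits. The competitive ratio equals the optimum of this covering LP.

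Next I would show that solutions of the covering LP on small graphs lift to arbitrary $n$. Because each batch is marginally a uniform size-$(d{+}1)$ subset of $[n]$, a per-batch covering bound holds locally; the main work is aggregating over batches without double-counting $M^*$-edges that straddle two consecutive batches. I would assign each such crossing edge randomly to one of its two batches and absorb the discrepancy into the $O(1/n)$ term. Then I would reduce to finitely many $d$: for $d$ above some explicit threshold $d^\star$, group consecutive arrivals into meta-vertices and invoke the bound at a smaller deadline, verifying that the ratio only improves under the reduction modulo lower-order terms. This leaves the finite family $d \in \{1, \ldots, d^\star\}$ to be checked.

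For each remaining value of $d$ the covering LP has finite size; solving numerically and minimizing over the family yields the claimed $0.279$. The residual $O(1/n)$ absorbs the boundary-rounding correction and the gap between the finite-permutation model and the asymptotic random-subset model the LP encodes. The hardest piece will be the lifting step: random-permutation matching values do not decompose cleanly over subgraphs because OPT can route long alternating paths through many batches, and accounting for boundary-crossing contributions while preserving both the covering LP's value and the $O(1/n)$ error is where the bulk of the technical effort concentrates.
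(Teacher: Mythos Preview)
Your high-level three-step outline matches the paper's, but the actual reduction you propose is not the paper's and, as written, does not close. The paper never tracks an optimal matching $M^*$ or argues about ``local configurations of $M^*$-edges around a batch.'' Instead it observes that for any weight graph $G$, $\mathrm{OPT}(\sigma)=m(P_n^d(\sigma)*G)$ and $\mathrm{BATCH}(\sigma)=m(B_n^d(\sigma)*G)$, where $P_n^d(\sigma)$ is the $0/1$ indicator graph of pairs at arrival-distance $\leq d$ and $B_n^d(\sigma)$ is the $0/1$ indicator of pairs landing in the same batch. The covering LP is then purely combinatorial and weight-free: find permutations $\sigma_1,\dots,\sigma_K$ and scalars $\lambda_k\geq 0$ with $\sum_k\lambda_k B_n^d(\sigma_k)\geq C_n^d$ edgewise and $\sum_k\lambda_k=\alpha$. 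The crucial step you are missing is the permutation-composition trick: for any fixed $\sigma$, $\{\sigma_k\circ\sigma\}$ is again an $(\alpha,d)$-cover of $P_n^d(\sigma)$, so $m(P_n^d(\sigma)*G)\leq\sum_k\lambda_k\,m(B_n^d(\sigma_k\circ\sigma)*G)$; summing over $\sigma$ and using that $\sigma\mapsto\sigma_k\circ\sigma$ is a bijection on $S_n$ gives $\E[\mathrm{OPT}]\leq\alpha\,\E[\mathrm{BATCH}]$ with no reference to $M^*$ or to edge weights. The reductions in $n$ and $d$ then operate on this cover (periodic extension, cycle contraction), not on any matching structure.

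Your route has two concrete gaps. First, the ``edge-by-edge factor $\tfrac12$'' is not correct: the probability $(d{+}1{-}k)/(d{+}1)$ only makes sense after you fix $\sigma$ and then randomly shift batch boundaries, but once $\sigma$ is fixed so is $M^*(\sigma)$, and the adversary can choose $G$ so that $M^*(\sigma)$ consists almost entirely of distance-$d$ edges (e.g.\ a heavy Hamiltonian path), driving the per-edge probability to $1/(d{+}1)$, not $\tfrac12$; averaging over $\sigma$ does not help because the event $\{(i,j)\in M^*(\sigma)\}$ is correlated with the arrival distance. Second, your covering LP quantifies over ``every local configuration of $M^*$-edges'' with real weights, which is an infinite family; you give no mechanism to reduce it to a finite LP, and the re-matching credit you want to claim when an $M^*$-edge is split depends on the weights of \emph{non}-$M^*$ edges inside the batch, which the adversary controls. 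The paper sidesteps both issues by never conditioning on $M^*$ and by making the LP live on the unweighted cycle power $C_n^d$.
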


The proof of Theorem \ref{th:competitive_ratio} works in three steps. In a first step, we reduce the analysis of the competitive ratio of \emph{Batching} to a graph covering problem. More precisely, we show that it is enough to cover $C_n^d$, the cycle with $n$ vertices to the power $d$, with ensembles of cliques. Second, we show how a cover for small $n$ can be extended to any $n$ at the cost of a small rounding error. Finally, we establish  a reduction that allows us to consider only a finite set of values for $d$. We conclude with a computer-aided argument for graphs in the finite family.

\subsubsection*{Reducing to a graph theoretic problem}

There is no harm in assuming that the underlying graph $G$ is a complete. 
Recall that $S_n$ is the set of all permutations over integers $1, ..., n$.
For any deadline $d$ and any arrival sequence $\sigma \in S_n$, we define the {\it path} graph $P_n^d(\sigma)$ with edge-weight $v_{ij} = 1$ if $|\sigma(i) - \sigma(j)| \leq d$, and $v_{ij} = 0$ otherwise.\footnote{Note that $P_n^d(\sigma)$ corresponds to the path $(\sigma(1), \sigma(2)), (\sigma(2), \sigma(3)), ..., (\sigma(n-1), \sigma(n))$ taken to the power $d$.}

Note that every batch in the algorithm has $d+1$ vertices except  the last batch which may have fewer vertices. Let $b_i(\sigma, d)$ be the batch of vertex $i$ under permutation $\sigma$ and batch size $d+1$: $b_i(\sigma, d)$ is the unique integer such that $ (d + 1) (b_i - 1)  < \sigma(i) \leq (d+1)b_i.$
We define the {\it batched} graph $B_n^{d}(\sigma)$ with edge-weight $v_{ij} = 1$ if $i$ and $j$ are in the same batch (i.e. $b_i(\sigma, d+1) = b_j(\sigma, d+1)$), and $v_{ij} = 0$ otherwise.\footnote{Note that $B_n^{d}(\sigma)$ is a collection of disjoint $(d+1)$-cliques.}

For any $n \geq d \geq 1$,  denote $C_n^d$ to be the $n$-cycle to the power $d$. 

\begin{definition}[Graph operations]
  For any two graphs $H$ and $H'$ with vertices $1, ..., n$ and respective edge weights $v_{ij}, v'_{ij}$, we define the following:
  \begin{itemize}
    \item[(i)] The linear combination $a H + b H'$ denotes the graph with edge weights $a v_{ij} + b v'_{ij}$,
    \item[(ii)] The product $H * H'$ denotes the graph with edge weights $v_{ij} * v'_{ij}$, and
    \item[(iii)] We say that $H$ is a {\it cover} of $H'$ if for all $i$, $j$, $v_{i,j} \geq  v'_{ij}$.
  \end{itemize}
\end{definition}

For any graph $H$, let $m(H)$ denote the value of a maximum-weight matching over $H$. Observe that when the arrival sequence is $\sigma$, the graph $P_n^d(\sigma) * G=G(d,\sigma)$ and therefore  the offline algorithm collects $m(P_n^d(\sigma) * G)$. Note that the  online algorithm collects $m(B_n^{d}(\sigma) * G)$.

\begin{remark}
  \label{rem:2.2}
  Observe that for any graphs $H, H', G$ and any $a, b \in \mathbb{R}$, we have:
  \begin{itemize}
    \item[-] $m(a H + b H') \leq a m(H) + b m(H')$.
    \item[-] If $H$ is a cover of $H'$, then, $m(H * G) \geq m(H' * G)$.
  \end{itemize}
\end{remark}

\begin{definition}[Periodic permutation]
  For $p < n$ such that $p$ divides $n$, we say that a permutation $\sigma \in S_n$ is $p$-periodic if 
   for all $i \in [1,n-p]$, $\sigma(i + p) \equiv \sigma(i) + p \mod n$.

  We say that a permutation $\sigma$ is periodic if there exists $p$ such that $\sigma$ is $p$-periodic.
\end{definition}

\begin{definition}[$(\alpha,d)$-cover]
  Let $F$ be an unweighted graph with $n$ vertices. We say that a set of permutations $\{\sigma_1, ..., \sigma_K\} \in S_n$ forms an $(\alpha, d)$-cover of $F$ if there exist values $\lambda_1, ..., \lambda_K \in [0,1]$ such that:
  \begin{itemize}
    \item[(i)] $\sum_{k \leq K} \lambda_k B_n^d(\sigma_k)$ is a cover of $F$.
    \item[(ii)] $\sum_{k \leq K} \lambda_k = \alpha$.
  \end{itemize}
  We say that an $(\alpha, d)$-cover is $p$-periodic if for all $k$, $\sigma_k$ is $p$-periodic.
\end{definition}



%
%

The next proposition  will allow us to abstract away from the weights that are chosen by the adversary.  For any graph $H$, we denote by $H_{ij}$  the weight $v_{ij}$ in $H$.
\begin{proposition}
  \label{prop:cover}
  If there exists an $(\alpha, d)$-cover of $C_n^d$, then batching is $1/\alpha$-competitive.
\end{proposition}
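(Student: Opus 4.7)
The plan is to show that for any underlying graph $G$ (which, since $G$ enters only through pointwise products, we may assume to be complete),
$$\E_\pi\bigl[m(P_n^d(\pi) * G)\bigr] \;\leq\; \alpha \cdot \E_\pi\bigl[m(B_n^d(\pi) * G)\bigr],$$
where $\pi$ is a uniform random arrival permutation. Since the left-hand side is the expected offline optimum and the right-hand side is $\alpha$ times the expected reward collected by batching, this immediately yields the $1/\alpha$-competitive bound.

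The first step is to bridge the ``path'' graph $P_n^d(\pi)$ arising from random arrivals to the ``cycle'' graph $C_n^d$ for which we have the hypothesized cover. Define the relabeling action on unweighted graphs by $(\pi \cdot H)_{ij} := H_{\pi(i),\pi(j)}$; this action is linear in $H$ and preserves pointwise inequalities. Two elementary observations drive the proof: (i) $\pi \cdot C_n^d$ dominates $P_n^d(\pi)$ pointwise, because requiring $\pi(i)$ and $\pi(j)$ to be within \emph{cyclic} distance $d$ on $\{1,\ldots,n\}$ is weaker than requiring linear distance $d$; and (ii) for every $k$, $\pi \cdot B_n^d(\sigma_k) = B_n^d(\sigma_k \circ \pi)$, which follows from unrolling the ``same batch'' condition once the composed permutation acts on vertex labels.

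Combining the cover hypothesis $\sum_k \lambda_k B_n^d(\sigma_k) \geq C_n^d$ with (i) and (ii), applying $\pi \cdot (\,\cdot\,)$ to both sides yields, for every $\pi$,
$$\sum_k \lambda_k\, B_n^d(\sigma_k \circ \pi) \;\geq\; \pi \cdot C_n^d \;\geq\; P_n^d(\pi).$$
Multiplying pointwise by $G$, invoking Remark~\ref{rem:2.2}(ii) to preserve the inequality under the product with $G$, and Remark~\ref{rem:2.2}(i) to push $m$ through the convex combination, we arrive at
$$m\bigl(P_n^d(\pi) * G\bigr) \;\leq\; \sum_k \lambda_k\, m\bigl(B_n^d(\sigma_k \circ \pi) * G\bigr).$$
Taking expectation over $\pi$ and using that $\sigma_k \circ \pi$ is itself uniform on $S_n$ for each fixed $\sigma_k$, every term on the right equals $\E_\pi[m(B_n^d(\pi) * G)]$; since $\sum_k \lambda_k = \alpha$, the bound above collapses to the target inequality.

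The main obstacle is the bookkeeping of permutations: in particular verifying the identity $\pi \cdot B_n^d(\sigma) = B_n^d(\sigma \circ \pi)$ and the pointwise domination $\pi \cdot C_n^d \geq P_n^d(\pi)$. Once these are in place, the remainder is a mechanical application of Remark~\ref{rem:2.2} together with the invariance of the uniform measure on $S_n$ under composition.
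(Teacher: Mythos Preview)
Your proof is correct and follows essentially the same route as the paper's own argument. Both proofs compose the cover permutations with the arrival permutation, observe that this yields a cover of $P_n^d(\pi)$ (using that $C_n^d$ dominates the path $P_n^d(\text{id})$), apply Remark~\ref{rem:2.2} to pass to matchings, and finish via the invariance of the uniform distribution on $S_n$ under left composition; your packaging via the relabeling action $(\pi\cdot H)_{ij}=H_{\pi(i),\pi(j)}$ makes the bookkeeping slightly more explicit, but the mathematical content is the same.
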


\begin{proof}
  Let {\it id} be the identity permutation over $n$ vertices. Let $\{ \sigma_1, ..., \sigma_K \}$ be an $(\alpha, d)$-cover of $C_n^d$.
  Fix an arrival sequence $\sigma \in S_n$. We first claim that  $\{ \sigma_1 \circ \sigma, ..., \sigma_K \circ \sigma \}$ is an $(\alpha, d)$-cover of $P_n^d(\sigma)$.

   For any $\sigma \in S_n$, let us denote $\beta_{i,j}(\sigma)$ and $\rho_{i,j}(\sigma)$ to be the weights of edge $(i,j)$ in $B_n^d(\sigma)$ and $P_n^d(\sigma)$ respectively. Consider $(i, j) \in P_n^d(\sigma)$: $|\sigma(i) - \sigma(j)| \leq d$:

  \begin{equation*}
    \begin{split}
      \sum_k \lambda_k \beta_{i,j}(\sigma_k \circ \sigma) &= \sum_k \lambda_k \mathbb{I}[b_i(\sigma_k \circ \sigma, d)= b_j(\sigma_k \circ \sigma, d)] \\
      & = \sum_k \lambda_k \mathbb{I}[b_{\sigma(i)}(\sigma_k, d) = b_{\sigma(j)}(\sigma_k, d)]\\
      & \geq \rho(\text{id})_{\sigma(i), \sigma(j)} = 1,
    \end{split}
  \end{equation*}
  where the last inequality is implied by the fact that  $\{ \sigma_1, ..., \sigma_K \}$ is an $(\alpha,d)$-cover of $C_n^d$ and therefore of $P_n^d(\text{id})$.
Therefore the claim holds using remark \ref{rem:2.2}.

Denote by BAT the value collected by the  batching algorithm and OFF the value collected by the offline algorithm. Observe that
  \begin{equation*}
    \begin{split}
       \text{OFF} &= \frac{1}{n!}\sum_{\sigma \in S_n} m(P_n^{d}(\sigma) * G) \\
        &\leq \frac{1}{n!} \sum_{\sigma \in S_n} \sum_k \lambda_k m(B_n^{d}(\sigma_k \circ \sigma) * G) \\
       &= \frac{1}{n!} \sum_k \lambda_k \sum_{\sigma' \in S_n} m(B_n^{d}(\sigma') * G) \\
       &= \alpha \text{BAT},
     \end{split}
   \end{equation*}

where we used the change of variable $\sigma' = \sigma_k \circ \sigma$ and the fact that the application $\mathcal{A}_k: \sigma \mapsto \sigma_k \circ \sigma$ is a bijection. 

\end{proof}


We have reduced the analysis of Batching to a graph-theoretic problem without edge weights. In what follows, we will show that we can reduce the problem further to find covers of $C_n^d$ for only small values of $n$ and $d$.

\subsubsection*{Reducing $n$: periodic covers.}
We now wish to find $(\alpha, d)$-covers for $C_n^d$ for every $n$ and $d$. In Proposition \ref{prop:cycle}, we show that it is sufficient to find periodic covers for small values of $n$.

\begin{proposition}
  \label{prop:cycle}
  Let $p$ be a multiple of $d+1$, and $n_1$ a multiple of $p$. Any $p$-periodic $(\alpha, d)$-cover of $C_{n_1}^d$ can be extended into an $(\alpha + O(\nicefrac{p}{n}), d)$-cover of $C_{n}^d$ for any $n \geq n_1$. 
\end{proposition}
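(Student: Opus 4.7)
The plan is to lift the given $p$-periodic $(\alpha,d)$-cover $\{(\sigma_k,\lambda_k)\}$ of $C_{n_1}^d$ to an $(\alpha+O(p/n),d)$-cover of $C_n^d$ via a tiling-plus-patching argument. I first handle the case when $n$ is a multiple of $n_1$, and then reduce arbitrary $n\ge n_1$ to this case by padding with $O(p/n)$ additional weight at the tail of the cycle.

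For $n=L\,n_1$ an integer multiple of $n_1$, I construct, for each $\sigma_k$, an extension $\tilde\sigma_k$ on $[1,n]$ by tiling: I place $L$ copies of $\sigma_k$'s pattern on the $L$ consecutive blocks $\{(\ell-1)n_1+1,\ldots,\ell n_1\}$ for $\ell=1,\ldots,L$. Concretely, $\tilde\sigma_k(i) := \sigma_k\bigl(((i-1)\bmod n_1)+1\bigr) + n_1\,\lfloor(i-1)/n_1\rfloor$. The resulting $\tilde\sigma_k$ is $n_1$-periodic (hence $p$-periodic), and its batching is a disjoint union of $L$ shifted copies of $\sigma_k$'s batching. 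For an edge $(i,j)\in C_n^d$ whose endpoints lie in a common tile, $\tilde\sigma_k$ puts $i$ and $j$ in the same batch iff $\sigma_k$ puts their tile-local representatives in the same batch, so intra-tile coverage is inherited exactly from the given $C_{n_1}^d$-cover without any weight loss.

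The main obstacle is the set of $O(dL)$ edges of $C_n^d$ spanning two consecutive tiles, together with the wrap-around edge at position $n$-to-$1$. These are handled by redeploying the wrap-around coverage already present in the $C_{n_1}^d$-cover: each wrap edge of $C_{n_1}^d$ covered by some $\sigma_k$ yields, via an appropriate cyclic shift, a $p$-periodic permutation of $[1,n]$ whose batching covers the corresponding tile-boundary edge of $C_n^d$. A careful accounting using the $+p$-symmetry of both the cover and $C_n^d$, together with the fact that the $C_{n_1}^d$-cover must already pay for these wrap edges and they form only a $1/L=O(n_1/n)$ fraction of the $+p$-orbits on $C_n^d$, allows the supplementary permutations to be taken with total weight $O(p/n)$. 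Combined with the tiled extension and the padding in the initial reduction, this produces the claimed $(\alpha+O(p/n),d)$-cover of $C_n^d$.
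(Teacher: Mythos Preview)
Your proposal has a genuine gap in the patching step. You claim the supplementary permutations covering the tile-boundary edges can be taken with total weight $O(p/n)$, but this is impossible. Your tiling $\tilde\sigma_k$ maps each block $\{(\ell-1)n_1+1,\ldots,\ell n_1\}$ bijectively onto itself; since $n_1$ is a multiple of $d+1$, every batch of $\tilde\sigma_k$ lies inside a single block, so every tile-boundary edge receives coverage exactly $0$ from the entire tiled family $\{\tilde\sigma_k\}$. Any such edge therefore needs supplementary coverage at least $1$, forcing the total supplementary weight to be at least $1$ regardless of $n$---it cannot be $O(p/n)$.

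The source of the difficulty is the parenthetical ``$n_1$-periodic (hence $p$-periodic)'': since $p\mid n_1$, the implication goes the other way. The paper's construction exploits this by taking the genuine $p$-periodic extension $\sigma'_k\in S_n$ with $\sigma'_k|_{[1,p]}=\sigma_k|_{[1,p]}$, rather than your $n_1$-tiling. Because both $\sigma'_k$ and $C_n^d$ are invariant under the shift by $p$ (and $(d+1)\mid p$), every edge $(i',j')$ of $C_n^d$ has the same batching status under $\sigma'_k$ as a suitable edge $(i,j)$ of $C_{n_1}^d$ with $i\equiv i'$, $j\equiv j'\pmod p$ has under $\sigma_k$; hence when $p\mid n$ one obtains an exact $(\alpha,d)$-cover with no patching at all. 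The $O(p/n)$ loss in the proposition arises solely from the case $p\nmid n$, where the paper handles the $v<p$ leftover vertices by inserting them at a varying position and averaging over that position.
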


\begin{proof}[Proof when $n$ is a multiple of $p$.]
  Let $\{\sigma_1, ..., \sigma_K\}$ be a $p$-periodic $(\alpha, d)$-cover of $C_{n_1}^d$. We will show that it can be extended into an $(\alpha, d)$-cover of $C_n^d$.

  Assume for now that $n$ is a multiple of $p$. Let $\sigma'_k$ be the $p$-periodic permutation over $1, ..., n$ such that for all $i \in [1, p]$, $\sigma'_k(i) = \sigma_k(i)$. Take $i',j' \in [1, n]$ such that $|i' - j' | \leq d$.
  Because $n_1>p$ is a multiple of $p$, there exist $i, j \in [1, n_1]$ such that $i \equiv i' \mod p$, $j \equiv j' \mod p$ and $| i - j| \leq d $.
  By $p$-periodicity of $\sigma_k$ and $\sigma'_k$, we know that $B_n^{d}(\sigma'_k)_{i',j'} = B_{n_1}^{d}(\sigma_k)_{i,j}$. Thus we can conclude that $\{\sigma_1', ..., \sigma_K'\}$ is an $(\alpha, d)$-cover of $C_n^d$.
\end{proof}

 In the case when $n$ is not a multiple of $p$, the proof follows similar ideas and looses an additional factor $\left(\frac{n}{n-p}\right)^2$ due to rounding of $n$ to a lower multiple of $p$. Details are provided in Appendix \ref{app:ro_proofs}.

  

\subsubsection*{Reducing $d$: cycle contraction.}
In Proposition \ref{prop:cycle}, we show that it is enough to find periodic $(\alpha, d)$-covers of $C_n^d$ for small values of $n$. Next, we provide a reduction that enables us to consider only a finite set of values for $d$.

The key idea of the reduction is that we can contract vertices of $C_{n}^d$ into $n/u$ groups of $u$ vertices. 
The resulting graph also happens to be a cycle $C_{n/u}^{(d+1)/u}$. 
 In Proposition \ref{prop:non_multiples}, we provide a way to expand an $(\alpha, u-1)$-cover on the contracted graph into an $(\alpha, d)$ cover on the original graph.

\begin{definition}[Cycle contraction]
	\label{def:contraction}
  For any $n,d$ and an integer $u$ which divides $n$, we define the $u$-contraction $f_u(C_n^d)$ to be the graph with vertices $a_k = \{uk+1, ..., u(k+1)\}$ for $k \in [0, \nicefrac{n}{u}-1]$, and edges $(a_k, a_l)$ if and only if there exist $i \in a_k$ and $j \in a_l$ with an edge $(i,j)$ in $C_n^d$.
\end{definition}

\begin{claim}
  For any $d$, if $u > 1$ divides $d+1$ and $d+1$ divides $n$, then $f_u(C_n^d) = C_{n/u}^{(d+1)/u}$.
\end{claim}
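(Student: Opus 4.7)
The plan is to verify the equality of two graphs on the vertex set $\{a_0,a_1,\ldots,a_{n/u-1}\}$ (where we identify $a_k$ with $k+1$ in $C_{n/u}$, and note that $n/u$ is an integer because $u \mid d+1$ and $d+1 \mid n$) by showing their edge sets agree. Since both graphs are translation-invariant under $k \mapsto k+1$, it suffices to compute, for every $c \in \{1,\ldots,\lfloor (n/u)/2\rfloor\}$, whether $(a_0,a_c)$ is an edge on each side.

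First I would compute the minimum cyclic distance in $C_n$ between a vertex of $a_0 = \{1,\ldots,u\}$ and a vertex of $a_c = \{uc+1,\ldots,u(c+1)\}$. Going the ``short way'' (i.e.\ increasing indices), the closest pair is $(u, uc+1)$, at distance $u(c-1)+1$. One checks that the ``long way'' is no shorter in the regime $c \leq \lfloor (n/u)/2\rfloor$, using that $n$ is a multiple of $u$. Hence by the definition of $f_u$, the edge $(a_0,a_c)$ is in $f_u(C_n^d)$ if and only if
\[
u(c-1)+1 \leq d.
\]

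Next I would simplify this inequality using the divisibility hypothesis $u \mid d+1$: write $d+1 = ku$ with $k$ a positive integer, so that $d = ku-1$. The inequality becomes $u(c-1) \leq ku - 2$, i.e.\ $c \leq k + 1 - 2/u$. Since $c$ is a positive integer and $u \geq 2$ gives $0 < 2/u \leq 1$, this is equivalent to $c \leq k = (d+1)/u$. But $c \leq (d+1)/u$ is exactly the edge condition defining $C_{n/u}^{(d+1)/u}$, so $(a_0,a_c)$ is an edge of $C_{n/u}^{(d+1)/u}$ if and only if it is an edge of $f_u(C_n^d)$. By the translation-invariance observation this completes the proof.

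The only slightly delicate step is step one, the min-distance computation, where one must verify that the ``long way'' around the cycle does not yield a shorter path for $c$ in the relevant range; this is a direct check using $n/u \geq 2c$. Everything else is bookkeeping, with the essential content being the arithmetic identity in the final step, which is exactly where the hypothesis $u \mid d+1$ is used.
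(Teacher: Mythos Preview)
Your proof is correct and follows essentially the same approach as the paper's: both arguments compute the minimum distance between the blocks $a_0$ and $a_c$ (the paper picks the extremal pair $i=u(k+1)$, $j=ul+1$ explicitly) and then use the divisibility $u\mid d+1$ to turn the inequality $u(c-1)+1\le d$ into $c\le (d+1)/u$. Your version is somewhat more careful than the paper's in that you explicitly invoke translation invariance to reduce to the pair $(a_0,a_c)$ and you explicitly handle the cyclic (``long way'') distance, whereas the paper works with the linear distance $|i-j|$ and leaves the cyclic aspect implicit; but the underlying computation is the same.
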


\begin{proof}
  We first prove that  $C_{n/u}^{(d+1)/u}$ covers $f_k(C_n^d)$. Fix $k,l \in [0, \nicefrac{n}{u}-1]$, and assume that $k < l$. If $|l-k| \leq (d+1)/u$, then let $i = u(k+1)$ and $j = ul + 1$. We have $|j - i| = u(l - k - 1) + 1 \leq d$, thus $(i,j) \in C_n^d$ and $(k, l) \in f_u(C_n^d)$.

  Conversely, we now prove that $f_u(C_n^d)$ covers $C_{n/u}^{(d+1)/u}$. If there exist $i \in a_k$ and $j \in a_l$ such that $|j - i| \leq d$, then $u(l - k) \leq ul + 1 - u(k+1) \leq d+1$ which implies that $(k,l) \in  C_{n/u}^{(d+1)/u}$.
\end{proof}

\begin{figure}[H]
  \centering
 \includegraphics[height=5cm]{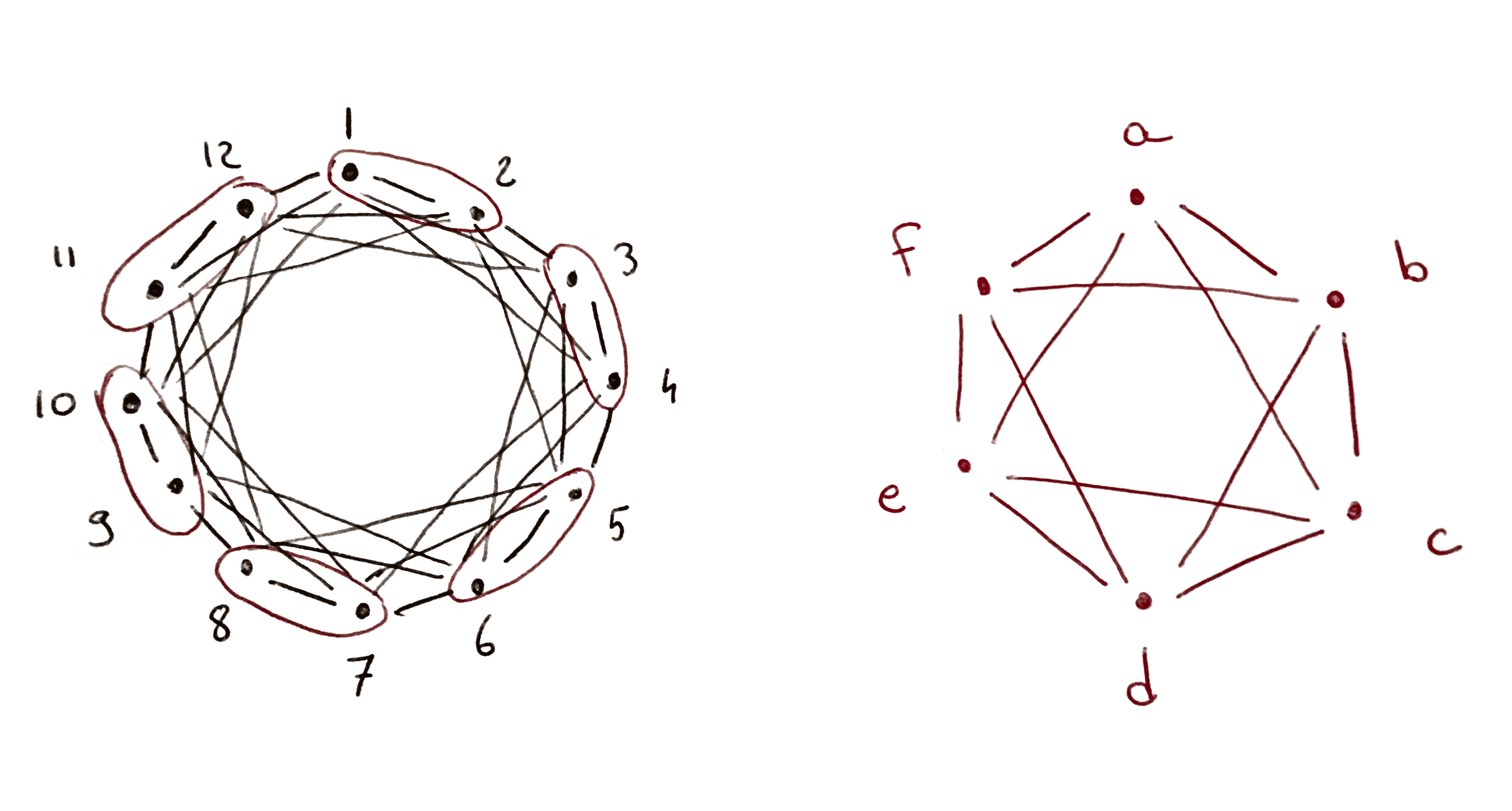}
 \caption{Left: $C_{12}^3$, with contraction for $u = 2$.  Right: Contracted graph $f(C_{12}^3) = C_6^2$ with vertices $a = \{1, 2\}$, $b = \{3, 4\}$, ... $f = \{11, 12\}$.}
 \label{fig:contraction}
 \hspace{0.2cm}
\end{figure}

\begin{proposition}
  \label{prop:non_multiples}
  Fix $d \geq 1$. For $d+1 > k\geq 1$, suppose that there is a periodic $(\alpha, k-1)$-cover of $C_{rk}^k$.
  \begin{itemize}
  \item[(i)] For any integer $r$, if $k$ divides $d+1$ then there exists a periodic $\left(\alpha, d\right)$-cover of $C_{r(d+1)}^d$.
  \item[(ii)] In general, if $v$ is the remainder of the euclidian division of $d+1$ by $k$, then there exists a periodic $\left(\alpha (1 + \nicefrac{v}{d+1-v})^2, d\right)$-cover of $C_{r(d+1)}^d$.
  \end{itemize}
  
\end{proposition}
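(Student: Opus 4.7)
The plan is to prove (i) by a direct block-expansion construction on the given permutations, and to prove (ii) by combining (i) with a rounding argument analogous to the non-multiple case of Proposition~\ref{prop:cycle}.

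For part (i), let $\{(\sigma_i,\lambda_i)\}_{i=1}^{K}$ denote the given periodic $(\alpha,k-1)$-cover of $C_{rk}^{k}$, and set $u=(d+1)/k$, a positive integer by hypothesis. For each $\sigma_i$ I would construct a permutation $\sigma'_i$ of $[r(d+1)]=[ruk]$ by expanding each vertex $c\in[rk]$ into a block of $u$ consecutive labels $\{u(c-1)+1,\dots,uc\}$ and placing the $\ell$-th element of block $c$ at $\sigma'_i$-position $u(\sigma_i(c)-1)+\ell$. Under this scheme the batches of size $d+1=uk$ in $B^{d}_{ruk}(\sigma'_i)$ are exactly the unions of $k$ consecutive blocks (in $\sigma_i$-order), yielding the key identity $B^{d}_{ruk}(\sigma'_i)_{i',j'}=B^{k-1}_{rk}(\sigma_i)_{a,b}$ whenever $i'$ lies in block $a$ and $j'$ in block $b$.

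I would then check coverage: any edge $(i',j')\in C^{d}_{r(d+1)}$, i.e.\ $|i'-j'|\leq uk-1$ cyclically, maps to a pair of blocks at cyclic distance $|a-b|\leq k$, hence $(a,b)\in C_{rk}^{k}$. The hypothesis gives $\sum_i \lambda_i B^{k-1}_{rk}(\sigma_i)_{a,b}\geq 1$, and the identity above then transfers this to $\sum_i \lambda_i B^{d}_{ruk}(\sigma'_i)_{i',j'}\geq 1$, producing the desired $(\alpha,d)$-cover. Periodicity is preserved: if $\sigma_i$ has period $p$, a short check on the definition of $\sigma'_i$ shows it has period $up$.

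For part (ii), write $d+1=qk+v$ with $0<v<k$. I would first apply (i) with $d'=qk-1$ to obtain a periodic $(\alpha,qk-1)$-cover of $C_{rqk}^{qk-1}$, and then convert this into a periodic cover of $C_{r(d+1)}^{d}$ at the claimed multiplicative cost. The conversion is a non-uniform block expansion: enlarge each group of $k$ consecutive size-$q$ blocks to total size $qk+v=d+1$ by inserting $v$ extra vertices, and average uniformly over all $d+1$ cyclic offsets of the insertion pattern. For a fixed offset, the batches of size $d+1$ in the new permutation are obtained by adjoining extra vertices to each batch of size $qk$ in the old permutation, so edges lying entirely within an enlarged batch are already covered. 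An edge of $C_{r(d+1)}^{d}$ whose endpoints straddle the insertions is covered only by the fraction of offsets for which the two endpoints land within cyclic distance $qk$ of each other in the old coordinates; a double counting argument, which loses one factor of $(d+1)/qk$ per endpoint, inflates the cover weight by $((d+1)/qk)^{2}=(1+v/(d+1-v))^{2}$. Periodicity of the final cover holds with period equal to the least common multiple of the old period and $d+1$.

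The main obstacle is the rounding step in (ii). The squared form of the loss indicates two independent sources of inflation — one from expanding the vertex count and one from expanding the cycle power — and the principal technical difficulty is to set up the averaging over cyclic offsets so that every edge of $C_{r(d+1)}^{d}$, including those whose endpoints straddle inserted vertices, is covered with total weight at least $1$, while keeping the construction periodic. Carrying out this averaging rigorously, and matching the squared constant exactly rather than up to a constant factor, is the delicate part of the argument.
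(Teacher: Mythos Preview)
Your treatment of part~(i) is essentially the paper's argument. The block expansion you describe is exactly the contraction $f_u$ of Definition~\ref{def:contraction} read in reverse: each vertex $c\in[rk]$ blows up to a block of $u$ vertices, the identity $B^{d}_{ruk}(\sigma'_i)_{i',j'}=B^{k-1}_{rk}(\sigma_i)_{a,b}$ is the key point, and the reduction of any edge of $C_{r(d+1)}^d$ to an edge of $C_{rk}^k$ is what the paper encodes by the equality $f_u(C_{r(d+1)}^d)=C_{rk}^k$. Nothing to change here.

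For part~(ii) your route diverges from the paper's, and the divergence matters. The paper does \emph{not} first pass through (i) and then round up; it works directly on $C_{r(d+1)}^d$ by choosing a uniformly random subset $\Phi\subset[1,d+1]$ of size $d+1-v=ku$, extending $\Phi$ periodically to a set $\Delta$ of $kur$ vertices, contracting $\Delta$ in groups of $u$, and applying the given $(\alpha,k-1)$-cover of $C_{rk}^k$ to the contracted graph. The squared loss $(1+v/(d+1-v))^2=((d+1)/(d+1-v))^2$ then comes from the probability that \emph{both} endpoints of a given edge of $C_{r(d+1)}^d$ land in $\Delta$, and the uniform randomness over all $\binom{d+1}{d+1-v}$ subsets is what makes the two endpoint events behave essentially independently.

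Your proposal replaces this uniform randomization by an average over only the $d+1$ cyclic shifts of a \emph{fixed} insertion pattern. That is not enough to recover the squared constant. Concretely, if the insertion pattern is a contiguous block of $v$ positions, then for an edge whose two endpoints have residues at cyclic distance at least $v$ in $[1,d+1]$, the set of bad offsets has size exactly $2v$, so the probability both endpoints are non-inserted is $(d+1-2v)/(d+1)$; a one-line check shows $(d+1-2v)/(d+1)<((d+1-v)/(d+1))^2$ whenever $v\geq 1$. So the cyclic-offset averaging you propose yields a strictly larger inflation factor than the one claimed in the proposition, and no choice of fixed pattern repairs this in the worst case. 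The ``one factor per endpoint'' heuristic you invoke tacitly assumes independence of the two endpoint events, which holds (approximately) for the paper's uniform-subset randomization but fails for cyclic shifts. The fix is to drop the cyclic-offset idea and average instead over all size-$(d+1-v)$ subsets $\Phi$, exactly as the paper does; periodicity is then obtained because the same $\Phi$ is used in every period.
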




\begin{proof}[Proof of (i)]
  Suppose that $d+1 = ku$ and suppose that there exists $p$ multiple of $d+1$ such that we have a $p$-periodic $(\alpha, k-1)$-cover $\{\sigma_1, ..., \sigma_K\}$ of $f_{u}\left(C_{r(d+1)}^d\right) = C_{rk}^k$.
  For any permutation $\sigma \in S_{rk}$ we can construct a permutation $\sigma' \in S_{r(d+1)}$ in the following way: if $i \in a_t$ then $\sigma'(i) = \frac{n}{k} \sigma(t) + i$.
  Because $B_{rk}^k(\sigma_i)$ is a cover of $B_{r(d+1)}^d$, we can conclude that $\sigma'_1, ..., \sigma'_K$ is an $(\alpha, d)$-cover of $C_{r(d+1)}^d$.

\end{proof}

The proof for case (ii) follows a similar idea, with an additional randomization that chooses a subset $d+1-v$ vertices that we can group in every group of $d+1$. The details are in Appendix \ref{app:ro_proofs}.

\subsubsection*{Final step: Computer-aided proof of factor $2.79$}

We will now apply Proposition \ref{prop:cycle} with $p = 2(d + 1)$ and $n_1 = 4(d+1)$. Let $\Omega_d$ be the set of $2(d+1)$-periodic permutations of $1, ..., 4(d+1)$.
We can find covers for $C_{4(d + 1)}^d$ using the following linear program:

\begin{equation*}
  \label{eq:LP}
\begin{array}{ll@{}ll}
\text{minimize}  & \displaystyle\sum\limits_{\sigma \in \Omega_{d}} &\lambda_\sigma  &\\
\text{subject to}& \displaystyle\sum\limits_{\sigma \in \Omega_{d}}   &\lambda_{\sigma} \mathbb{I}[b_i(\sigma, d) = b_j(\sigma, d)] \geq 1,  &\forall (i,j) \in C_{4(d + 1)}^d\\
                 &                                                &\lambda_{\sigma} \in \mathbb{R}^+, &\sigma\in \Omega_{d}
\end{array}
\tag{LP$_d$}
\end{equation*}
\begin{proposition}
Let $\alpha_d$ be the solution to LP$_d$. Let $\alpha = \sup_{d \geq 1} \alpha_d$. Batching is $\left(1 / \alpha + O(1/n)\right)$-competitive.
\end{proposition}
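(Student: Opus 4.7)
The plan is to combine directly the three reductions established above.

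First, LP$_d$ encodes exactly the problem of finding a $2(d+1)$-periodic $(\alpha, d)$-cover of $C_{4(d+1)}^d$ of smallest total weight: the set $\Omega_d$ enumerates all admissible permutations, the constraint $\sum_\sigma \lambda_\sigma \mathbb{I}[b_i(\sigma,d) = b_j(\sigma,d)] \geq 1$ for every edge $(i,j) \in C_{4(d+1)}^d$ says exactly that $\sum_\sigma \lambda_\sigma B_{4(d+1)}^d(\sigma)$ covers $C_{4(d+1)}^d$, and the objective $\sum_\sigma \lambda_\sigma$ is the total weight appearing in the definition of an $(\alpha, d)$-cover. Hence an optimal solution of LP$_d$ exhibits a $2(d+1)$-periodic $(\alpha_d, d)$-cover of $C_{4(d+1)}^d$.

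Second, I apply Proposition \ref{prop:cycle} with $p = 2(d+1)$ and $n_1 = 4(d+1)$; since $p$ is a multiple of $d+1$ and $n_1$ is a multiple of $p$, both hypotheses are met and the cover extends to an $(\alpha_d + O((d+1)/n), d)$-cover of $C_n^d$ for every $n \geq 4(d+1)$. Proposition \ref{prop:cover} then yields that batching is $1/(\alpha_d + O((d+1)/n))$-competitive on every instance with $n$ vertices and deadline $d$. Since $\alpha_d \leq \alpha$ by definition of the supremum, this lower-bounds the competitive ratio by $1/(\alpha + O((d+1)/n)) = 1/\alpha + O((d+1)/n)$, which becomes $1/\alpha + O(1/n)$ once $d$ is treated as bounded independently of $n$.

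The main subtlety is this last step: the rounding error from Proposition \ref{prop:cycle} scales with $d$, and is absorbed into the $O(1/n)$ only under the ``sufficiently large $n$'' assumption of Theorem \ref{th:competitive_ratio}. A related concern is whether $\alpha = \sup_{d \geq 1} \alpha_d$ is even finite. Both issues are handled by Proposition \ref{prop:non_multiples}: cycle contraction lets one reduce any deadline $d$ to a smaller representative at multiplicative cost $(1 + v/(d+1-v))^2 \to 1$, so the supremum reduces to a finite enumeration over small deadlines, each of whose LP$_d$ value can be computed explicitly in the computer-aided step.
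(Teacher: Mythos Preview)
Your proposal is correct and follows exactly the paper's approach: the paper's proof is the single line ``Follows from Propositions \ref{prop:cover} and \ref{prop:cycle},'' and you have simply spelled out how those two propositions combine, together with the observation that LP$_d$ is by construction the search for a minimum-weight $2(d+1)$-periodic cover of $C_{4(d+1)}^d$. Your additional remarks on the $d$-dependence hidden in the $O(1/n)$ term and on the finiteness of $\alpha$ are valid caveats that the paper leaves implicit (they are absorbed, as you note, into the subsequent contraction argument and computer-aided step), but they go beyond what the paper's own proof of this proposition contains.
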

\begin{proof}
Follows from Propositions \ref{prop:cover} and \ref{prop:cycle}.
\end{proof}
The Linear program \eqref{eq:LP} has $O(d!)$ variables, and solving it may not be computationally possible when $d$ is large. Using Proposition \ref{prop:non_multiples}, we now provide a way to find upper bounds on $\alpha_d$ by solving a different LP on a smaller graph.

Recall that $\Omega_{k-1}$ is the set of $2k$-periodic permutations of $1, ..., 4k$. We define the problem of finding an $(\alpha, k-1)$-cover of the cycle $C_{4k}^k$. 
\begin{equation*}
  \label{eq:LP'}
\begin{array}{ll@{}ll}
\text{minimize}  & \displaystyle\sum\limits_{\sigma \in \Omega_{k-1}} &\lambda_\sigma  &\\
\text{subject to}& \displaystyle\sum\limits_{\sigma \in \Omega_{k-1}}   &\lambda_{\sigma} \mathbb{I}[b_i(\sigma, k-1) = b_j(\sigma, k-1)] \geq 1,  &\forall (i,j) \in C_{4k}^k\\
                 &                                                &\lambda_{\sigma} \in \mathbb{R}^+, &\sigma\in\Omega_{k-1}
\end{array}
\tag{LP'$_k$}
\end{equation*}

We denote by $\alpha'_k$ the solution to \eqref{eq:LP'}.
 Solving \eqref{eq:LP'} numerically for $k = 4$ yields $\alpha'_4 \leq 3.17$. For all $d \geq 52$ Proposition \ref{prop:non_multiples} therefore implies that, $\alpha_d \leq 3.17 * \left(\frac{51}{49}\right)^2 = 3.58$. \footnote{We note that our methodology can be extended to obtain a better factor. For instance, being able to solve \eqref{eq:LP} for values higher than $50$ would lead to a competitive ratio closer to $\frac{1}{3}$.}
 For $d \leq 50$, we either solve \eqref{eq:LP} directly, or use Proposition \ref{prop:non_multiples} to show that $\alpha_d \leq 3.58$ (see Appendix \ref{app:numerical}).
  Observing that $2.79 \leq \frac{1}{3.58}$, this concludes the proof for Theorem \ref{th:competitive_ratio}.
  \qedsymbol

\subsection{Lower bound in random order.}

\begin{proposition}
  No algorithm is more than $ \frac{1}{2}$-competitive even under the random arrival order.
\end{proposition}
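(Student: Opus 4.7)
The goal is to exhibit a hard instance on which every randomized online algorithm's expected reward, averaged over the uniformly random $\sigma \in S_n$, is at most half of $\mathbb{E}[m(G_{d,\sigma})]$. The natural candidate is the 3-vertex gadget from Figure~\ref{fig:hard:basic}, which already gives the $1/2$ upper bound in the adversarial-order setting by letting the adversary choose $y = v_{2,3} \in \{0, M\}$ after observing the algorithm's decision at time $2$. Under random order, however, the adversary cannot condition $y$ on the algorithm's action, so a direct lift of the adversarial argument is not available; one must instead invoke Yao's minimax principle.

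I would proceed as follows. First, fix an adversarial distribution over the instance---for example, $v_{1,2} = 1$ and $v_{2,3}$ drawn from a carefully chosen distribution on $\{0, M\}$---and fix an arbitrary deterministic algorithm $\mathcal{A}$, described by its deterministic response to the observation made at each critical moment (at time $2$ this is the weight of the edge between the first two arrivals). Second, enumerate the $|S_3|\cdot|\mathrm{supp}(y)|$ combined scenarios, compute both $\mathcal{A}$'s payoff and $m(G_{d,\sigma})$ in each, and take expectations over both the order and the adversary's randomization. Third, argue that for every deterministic rule of $\mathcal{A}$, this ratio is bounded by $1/2$, which by Yao implies the claimed upper bound for randomized algorithms under random order.

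The main obstacle is precisely the extra information the algorithm gains under random order: the observed edge weight lets it distinguish, say, a ``$y=0$'' world from a ``$y=M$'' world, and tailor its match probability accordingly. A naive 3-vertex instance therefore does not suffice, since the algorithm can there drive the expected ratio well above $1/2$. Overcoming this requires either a subtler distribution on the weights that keeps local observations uninformative about future structure, or scaling up to a multi-gadget instance on $\Theta(n)$ vertices obtained by taking many independent copies of Figure~\ref{fig:hard:basic}. In the latter case, random order interleaves gadget vertices so that, when a gadget's vertex becomes critical, its gadget-mates have typically not yet arrived; a linearity-of-expectation argument over the independent gadgets then reduces the problem to a per-gadget $1/2$ bound that mirrors the adversarial analysis of Figure~\ref{fig:hard:basic}. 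Establishing the conditional-independence statements that justify this per-gadget reduction---showing that, at the critical decision moment, the algorithm's posterior on the missing gadget weight is essentially the adversary's prior---is the crux of the proof.
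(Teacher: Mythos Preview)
Your plan diverges sharply from the paper's argument, and in an instructive way. The paper does \emph{not} use Yao's principle, a distribution over instances, or a multi-gadget construction. It fixes a single three-vertex graph with $d=1$, asserts that at time $2$ the algorithm ``has no information on how $v_{\sigma(1),\sigma(2)}$ compares to the other edge weights'' and hence that the decision ``has to be a coin toss'', and then computes that any such oblivious rule collects $(v_{1,2}+v_{2,3}+v_{3,1})/3$ in expectation against OFF's $\frac{1}{3}\bigl(\max(v_{1,2},v_{2,3})+\max(v_{2,3},v_{3,1})+\max(v_{3,1},v_{1,2})\bigr)$; sending $v_{1,2}=v_{2,3}\to 0$ gives $1/2$.

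Your instinct that this is too quick is sound: the ``coin toss'' step is not justified as written, because the algorithm does see the numerical value of the first edge and may condition on it. On the paper's own limiting instance $(0,0,v_{3,1})$, the rule ``match iff the observed weight is positive'' attains ratio $1$, not $1/2$. However, your conclusion that ``a naive 3-vertex instance therefore does not suffice'' goes too far, and the multi-gadget machinery you propose is unnecessary. The three-vertex instance is enough once you add a scaling argument: for any rule $f\colon\mathbb{R}_+\to[0,1]$ and any $K>1$, boundedness of $f$ forces the existence of $w>0$ with $f(Kw)-f(w)$ arbitrarily small (otherwise $f(K^n w_0)\to\infty$ along a geometric sequence). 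On the instance with weights $(w,w,Kw)$ one checks directly that
\[
\frac{\text{ALG}}{\text{OFF}}=\frac{(K+2)+(K-1)\bigl(f(Kw)-f(w)\bigr)}{2K+1},
\]
which tends to $\tfrac{1}{2}$ as $K\to\infty$ along such $w$. This is precisely the regime in which the paper's ``coin toss'' heuristic becomes rigorous: the adversary places all three weights at a scale where $f$ is essentially flat, so the observed value carries no useful comparative information. Your Yao/multi-gadget route would also reach the bound, but at substantially greater cost.
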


\begin{proof}
Consider a graph with three vertices $\{1,2,3\}$ and $d = 1$, i.e. vertices can only be matched to the ones arriving just before or after them.
After the first two arrivals, the online algorithm $\mathcal{A}$ needs to decide whether to match them or let the first arrival leave. Furthermore, it has no information on how $v_{\sigma(1), \sigma(2)}$ compares to the other edge weights. Therefore the decision of whether to match has to be a coin toss.
Regardless of whether the algorithm matches the first two or the last two arrivals, its expected reward is $\frac{v_{1,2} + v_{2,3} + v_{3,1}}{3}$. $\text{OFF}$ however has an expected reward of $\frac{\max(v_{1,2}, v_{2,3}) + \max(v_{2,3}, v_{3,1}) + \max(v_{3,1}, v_{1,2})}{3}$.
Taking $v_{1,2} = v_{2,3} \rightarrow 0$, we get $\mathcal{A} = 1/3$ while $\text{OFF} = 2/3$ which concludes the proof.
\end{proof}


\section{Extensions}

\subsection{Stochastic departures in the adversarial order setting}

We relax the assumption that all vertices depart after exactly $d$ time steps.

We therefore focus on the stochastic case, in which  the departure time $d_i$ of vertex $i$ is sampled independently from a distribution $\mathcal{D}$.
We assume that the realizations $d_i$ are only known at the time $i$ becomes critical.

\begin{proposition}
  \label{prop:single_bid}
  Suppose that there exists $\alpha \in (0,1)$ such that $\mathcal{D}$ satisfies the property that for all $i < j$,
  $$\P[i + d_i \leq j + d_j | i + d_i \geq j] \geq \alpha.$$
  Then PG is $\nicefrac{\alpha}{4}$-competitive.
\end{proposition}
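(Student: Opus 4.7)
The plan is to revisit the proof of Theorem~\ref{th:factor4} and track where the stochastic deadlines enter. The dual-feasibility step carries over essentially unchanged: for any edge $(i,j)$ with $i<j$ in the (now random) offline graph, i.e.,~$i+d_i \geq j$, the seller $s_i$ is still present in $S_j$ when $b_j$ arrives, hence $q(b_j) \geq v_{ij} - p(s_i) \geq v_{ij} - p^f(s_i)$. Together with the greedy exchange identity $\sum_s p^f(s) = \sum_b q(b)$, this yields $\text{OFF} \leq 2\sum_k p^f(s_k)$ for every realization of the deadlines.

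The substantive step is to lower bound PG in terms of $\sum_k p^f(s_k)$. In the deterministic setting each virtual match $(s_k, b_l)$ contributes $p^f(s_k)=v_{k,l}$ to PG with probability $1/2$, namely when vertex $k$'s final status is \emph{seller}. In the stochastic setting this contribution additionally requires vertex $l$ to still be alive at the time $k+d_k$ at which $s_k$ becomes critical and the match is finalized; equivalently, we need the event $E_{k,l}:=\{k+d_k \leq l+d_l\}$. Conditional on $E_{k,l}$, the status propagation is consistent: since $b_l$ is matched in the virtual graph only to $s_k$, and $s_l$ has not yet become critical, $v_l$'s status is still undetermined at $s_k$'s criticality and so is set correctly by propagation, so that the match $(k,l)$ is realized precisely when $v_k$ is a seller.

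Because the coin flips defining statuses are independent of the $d_i$'s, $\P[v_k=\text{seller}\mid d_1,\ldots,d_n]=1/2$, and the hypothesis on $\mathcal{D}$ gives $\P[E_{k,l}\mid F_{k,l}]\geq\alpha$, where $F_{k,l}:=\{k+d_k\geq l\}$ is the edge-existence event. Putting these together should yield that each virtual match is physically realized with probability at least $\alpha/2$, and therefore $\E[\text{PG}]\geq(\alpha/2)\,\E\bigl[\sum_k p^f(s_k)\bigr]\geq(\alpha/4)\,\E[\text{OFF}]$.

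The main obstacle I anticipate is replacing the plain edge-existence conditioning $F_{k,l}$ by the stronger conditioning on $G_{k,l}$, the event that greedy's surviving match pairs $s_k$ with $b_l$: in principle $G_{k,l}$ can depend on $d_l$ via displacements of $b_l$ from $s_k$ by later-arriving buyers whose bid sets include $s_l$. I plan to handle this by conditioning on the algorithm's state at time $l$ --- which depends only on $d_j$ for $j<l$ --- and then using the independence of $d_l$ from that history, averaged over the induced distribution of $d_k$, to recover the hypothesis in the form $\P[E_{k,l}\mid G_{k,l}]\geq\alpha$.
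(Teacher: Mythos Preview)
Your proposal follows essentially the same line as the paper's proof, which is itself only a short sketch: the paper simply asserts that when $k$'s status is \emph{seller}, its tentative match $b_l$ is still present with probability at least~$\alpha$, then defers to the argument of Theorem~\ref{th:factor4}. You in fact go further than the paper by isolating the conditioning subtlety---whether the hypothesis $\P[E_{k,l}\mid F_{k,l}]\geq\alpha$ transfers to $\P[E_{k,l}\mid G_{k,l}]\geq\alpha$ once one conditions on the greedy algorithm's realized pairing---which the paper's proof does not address at all.
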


\begin{proof}
When a vertex $k$ becomes critical in the original graph, we match it if its status is determined to be \emph{seller}. In that case, we need to ensure that its tentative match $b_l$ is still present. With probability at least $\alpha$, vertex $l$ is still present, and we collect $p(s_k)$. The rest of the proof follows similarly to that of Theorem \ref{th:factor4}
\end{proof}

\begin{corollary}
  PG is $\nicefrac{1}{8}$-competitive when $\mathcal{D}$ is memoryless.
\end{corollary}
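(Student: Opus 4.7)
The plan is to apply Proposition~\ref{prop:single_bid} with $\alpha = \nicefrac{1}{2}$, so the only work is to verify that the memoryless property of $\mathcal{D}$ forces
\[
\P[i + d_i \leq j + d_j \mid i + d_i \geq j] \;\geq\; \tfrac{1}{2}
\]
for every pair $i < j$. Once this inequality is established, the corollary follows immediately by plugging $\alpha = \nicefrac{1}{2}$ into the $\nicefrac{\alpha}{4}$ bound from Proposition~\ref{prop:single_bid}.

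To verify the inequality, first I would rewrite the event. Condition on $\{d_i \geq j - i\}$, and let $X := d_i - (j - i)$ denote the residual lifetime of vertex $i$ at time $j$. The memoryless property of $\mathcal{D}$ says exactly that the conditional distribution of $X$ given $\{d_i \geq j-i\}$ is equal to the unconditional distribution of $d_i$, which in turn equals the distribution of $d_j$ since the departure times are i.i.d.\ from $\mathcal{D}$. The conditional event $\{i + d_i \leq j + d_j\}$ then becomes $\{X \leq d_j\}$ with $X$ and $d_j$ independent and identically distributed.

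The remaining step is a standard symmetry argument: for any two i.i.d.\ random variables $X$ and $Y$,
\[
\P[X \leq Y] \;=\; \tfrac{1}{2}\bigl(\P[X \leq Y] + \P[Y \leq X]\bigr) \;=\; \tfrac{1}{2}\bigl(1 + \P[X = Y]\bigr) \;\geq\; \tfrac{1}{2}.
\]
Applied to $X$ and $d_j$ this gives $\alpha \geq \nicefrac{1}{2}$, and Proposition~\ref{prop:single_bid} yields the claimed $\nicefrac{1}{8}$ competitive ratio.

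I do not expect any real obstacle here; the only subtlety worth flagging is the discrete-time convention. In a continuous memoryless distribution (exponential) one has $\P[X = d_j] = 0$ and $\alpha$ is exactly $\nicefrac{1}{2}$; in the discrete memoryless case (geometric) one gets $\alpha$ strictly above $\nicefrac{1}{2}$, but either way the hypothesis of Proposition~\ref{prop:single_bid} is satisfied with $\alpha = \nicefrac{1}{2}$, so a uniform $\nicefrac{1}{8}$ bound holds.
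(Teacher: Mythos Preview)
Your proposal is correct and is exactly the intended argument: the paper states the corollary without proof, relying on the reader to see that memorylessness gives $\alpha=\nicefrac{1}{2}$ in Proposition~\ref{prop:single_bid}, and your residual-lifetime plus i.i.d.\ symmetry computation is precisely the verification one would supply.
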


\subsection{Look-ahead under random arrival order}


We assume now that the online algorithm knows vertices that will arrive in $l$ time steps (and their adjacent edges). We can update the Batching Algorithm in the following way: every $d+l+1$ time steps, compute a maximum-weight matching on both the current vertices and the next $l$ arrivals. Match vertices as they become critical according to the matching, and discard unmatched vertices. 
Note that this is the same as running Batching when the deadline is $d+l$.

\begin{proposition}
There exists an $(\frac{d+l+1}{l+1}, d + l)$-cover of $C_n^d$.
\end{proposition}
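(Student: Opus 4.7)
The plan is to construct an explicit $\bigl(\tfrac{d+l+1}{l+1}, d+l\bigr)$-cover of $C_n^d$ using the $d+l+1$ cyclic shifts of the identity permutation. For each $k \in \{0, 1, \ldots, d+l\}$, define $\sigma_k$ to be the permutation given by $\sigma_k(i) \equiv i + k \pmod{n}$, which is $(d+l+1)$-periodic. I assume for simplicity that $n$ is a multiple of $d+l+1$; the general case follows by invoking Proposition~\ref{prop:cycle} to lift the construction to arbitrary $n$ at an $O(1/n)$ cost (absorbed into the statement's periodic cover setup). Each of these $d+l+1$ permutations will be assigned the uniform weight $\lambda_k = \tfrac{1}{l+1}$, so that the total weight is $\sum_k \lambda_k = \tfrac{d+l+1}{l+1}$, matching the target $\alpha$.

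The core of the argument is a counting step. Under $\sigma_k$, the batches of $B_n^{d+l}(\sigma_k)$ are the cyclic translates by $k$ of the contiguous blocks $\{1, \ldots, d+l+1\}, \{d+l+2, \ldots, 2(d+l+1)\}, \ldots$, each of size $d+l+1$. For a fixed pair $(i,j)$ with cyclic distance $m := |i-j| \leq d$, I would count the number of shifts $k \in \{0, \ldots, d+l\}$ for which a batch boundary of $\sigma_k$ falls strictly between $i$ and $j$: there are exactly $m$ such shifts (one for each position between the two endpoints). Hence the number of shifts that place $i$ and $j$ in the same batch is exactly $(d+l+1) - m \geq (d+l+1) - d = l+1$.

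Combining this with the uniform weighting, for every edge $(i,j) \in C_n^d$ one has
\[
\sum_{k=0}^{d+l} \lambda_k \, \mathbb{I}\bigl[b_i(\sigma_k, d+l) = b_j(\sigma_k, d+l)\bigr] \;\geq\; \frac{1}{l+1} \cdot (l+1) \;=\; 1,
\]
which is exactly the covering condition in the definition of an $(\alpha, d+l)$-cover. Together with $\sum_k \lambda_k = \tfrac{d+l+1}{l+1}$, this yields the claimed cover. There is no substantive obstacle: the argument is a straightforward counting computation about how cyclic shifts interact with a fixed block partition, and the $(d+l+1)$-periodic structure of each $\sigma_k$ makes the cover directly periodic in the sense required by Proposition~\ref{prop:cycle}.
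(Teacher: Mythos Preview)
Your proposal is correct and follows essentially the same approach as the paper: both use the $d+l+1$ cyclic shifts $\sigma_k(i)\equiv i+k\pmod n$ with uniform weights $\lambda_k=\tfrac{1}{l+1}$, and both rely on the observation that any edge of $C_n^d$ lands in a common batch for at least $l+1$ of these shifts. Your counting argument (exactly $m$ shifts place a boundary between endpoints at cyclic distance $m$) is a slightly more explicit version of the paper's one-line claim, and your remark about assuming $d+l+1\mid n$ and invoking Proposition~\ref{prop:cycle} otherwise is a reasonable clarification that the paper leaves implicit.
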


\begin{proof}
    For $k \in [0, d+l]$, let $\sigma_k(i) = i + k\mod n$. Let $i,j$ be such that $|i - j| \leq d$, then $b_i(\sigma_k, d) = b_j(\sigma_k, d)$ for at least $l+1$ different values of $k$.
    We can conclude that $\sigma_0, ..., \sigma_{d + l}$ is a $(\frac{d + l + 1}{l+1}, d+l)$-cover by taking $\lambda_0 = ... = \lambda_{d + l} = \frac{1}{l + 1}$.
\end{proof}

\begin{corollary}
  Batching with $l$-lookahead is $\frac{l+1}{d+l+1}$-competitive when $n$ is large.
\end{corollary}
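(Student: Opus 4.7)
The plan is to reuse the covering-to-competitive-ratio reduction of Proposition \ref{prop:cover} essentially verbatim, feeding it the $\bigl(\tfrac{d+l+1}{l+1},\,d+l\bigr)$-cover of $C_n^d$ guaranteed by the preceding proposition. The only conceptual point to get right is that Batching with $l$-lookahead is, operationally, Batching with effective batch size $d+l+1$ (i.e.\ effective deadline $d+l$), but the benchmark is still the offline matching restricted to edges of the original graph $G_{d,\sigma}$, i.e.\ the one with deadline $d$. So I want the asymmetric version of Proposition \ref{prop:cover} in which the batching parameter is $d+l$ while the path-graph parameter on the offline side is still $d$; this asymmetry is exactly what the cover from the previous proposition is built for.

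Concretely, I would take the cover $\{\sigma_1,\ldots,\sigma_K\}$ with weights $\lambda_k$ summing to $\alpha=\tfrac{d+l+1}{l+1}$, fix an arrival order $\sigma\in S_n$, and replay the composition step of Proposition \ref{prop:cover}: for any $(i,j)$ with $|\sigma(i)-\sigma(j)|\le d$, the cover property applied to the pair $(\sigma(i),\sigma(j))$ gives
\[
\sum_k \lambda_k\,\mathbb{I}[b_i(\sigma_k\circ\sigma,\,d+l)=b_j(\sigma_k\circ\sigma,\,d+l)]\ \ge\ 1,
\]
so $\sum_k \lambda_k B_n^{d+l}(\sigma_k\circ\sigma)$ covers $P_n^d(\sigma)$. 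By Remark \ref{rem:2.2}, $m(P_n^d(\sigma)*G)\le\sum_k\lambda_k\,m(B_n^{d+l}(\sigma_k\circ\sigma)*G)$. Averaging over $\sigma\in S_n$ and using that $\sigma\mapsto\sigma_k\circ\sigma$ is a bijection on $S_n$ then yields $\mathrm{OFF}\le\alpha\,\mathrm{BAT}$, which is the claimed ratio $\tfrac{l+1}{d+l+1}$.

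The only place where anything can go wrong is at the boundary when $n$ is not a multiple of $d+l+1$: the shift-based cover of the preceding proposition wraps around cleanly only when the batches tile $[1,n]$ evenly, and otherwise the last partial batch and the wrap edges force a small correction. This is exactly the kind of rounding handled in Proposition \ref{prop:cycle} and contributes an $O(1/n)$ additive loss, which is absorbed by the "$n$ large" qualifier; the leading constant $\tfrac{l+1}{d+l+1}$ is untouched. I do not expect any genuine obstacle here — everything follows by mechanically tracking the two deadline parameters through the already-established reduction.
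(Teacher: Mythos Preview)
Your proposal is correct and is exactly the intended argument: the paper states the corollary without proof, as an immediate consequence of the preceding cover and the reduction of Proposition~\ref{prop:cover}, and you have spelled out precisely the asymmetric variant of that reduction (batching parameter $d+l$, offline parameter $d$) that is needed. Your remark about the $O(1/n)$ rounding when $n$ is not a multiple of $d+l+1$ is the right way to account for the ``$n$ large'' qualifier.
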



\section{Conclusion}

This  paper introduces a model for dynamic matching in which all agents arrive and depart after some deadline. Match values are heterogeneous and the underlying graph is  non-bipartite. We study online algorithms for two settings, where vertices  arrive  in an adversarial or random order.

In the adversarial arrival case, we introduce two new $1/4$-competitive algorithms when departures are deterministic and known in advance. We also provide a $1/8$-competitive algorithm when departures are stochastic, i.i.d, memoryless, and known at the time a vertex becomes critical. Finally we show that no online algorithm is more than $1/2$-competitive.

In the random arrival case, we show that a batching algorithm is $0.279$-competitive. We also show that with knowledge of future arrivals, its performance guarantee increases towards $1$.

Importantly,  our model imposes restrictions on the departure process and requires the  algorithm to know when vertices become critical. Other than closing the gaps between the upper bound  $\nicefrac{1}{2}$ and the achievable competitive ratios, we point out a just a few interesting directions for  future research. 
Our model imposes that matches retain the same value regardless of when they are conducted. An interesting direction is to account for agents' waiting times.  
A different  intersting  objective is to achieve both a high total value and a large  fraction of matched agents. Finally, it is interesting to consider the stochastic setting with prior infromation over weights and future arrivals.


%
%

\bibliography{bibliography}
\bibliographystyle{apalike}

\appendix

\section{Missing proofs for Dynamic Deferred Acceptance}
\label{app:missing_proofs}
\begin{lemma}
\label{lem:mon}
Consider the DDA algorithm on a constrained bipartite graph.
\begin{enumerate}
\item Throughout the algorithm,  prices corresponding the sellers never decrease and the profit margins of buyers never increase.
\item At the end of every ascending auction,  prices of the sellers and the marginal profits of the buyers form an optimal solution to the dual  of the matching linear program associated with  buyers and sellers present at that particular time.
\end{enumerate}
\end{lemma}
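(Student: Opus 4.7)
The plan for part 1 is a direct inspection of the pseudocode. The only place where a seller's price is modified is step (2)(b)(i), where $p_s \leftarrow p_s + \epsilon$; no step ever decreases $p_s$, so prices are non-decreasing in time. For the buyer margins, I interpret $q_b$ at an arbitrary time $t$ as $\max_{s \in S_t}(v_{s,b} - p_s)$. Because the graph is constrained bipartite, any seller arriving after $b$ has no edge to $b$ and contributes only $-p_s \leq 0$ to this max, so one may restrict the max to sellers that were already present when $b$ arrived without loss. Each term $v_{s,b} - p_s$ is non-increasing by the price claim, and the restricted indexing set only shrinks with time as sellers become critical and depart. Hence $q_b$ is non-increasing in $t$.

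For part 2, let $t$ denote the end of an ascending auction and let $M$ be the current tentative matching on the bipartite subgraph $(S_t, B_t)$. I will show that $(p_s)_{s \in S_t}$ together with $\lambda_b := \max(0, q_b)$ for $b \in B_t$ is a feasible dual solution and then verify complementary slackness with $M$. For feasibility on any present edge $(s,b)$, one has $q_b \geq v_{s,b} - p_s$ by the very definition of $q_b$ as a max over available sellers, so $p_s + \lambda_b \geq v_{s,b}$; non-negativity of $p_s$ is immediate and $\lambda_b \geq 0$ by construction. For complementary slackness in the limit $\epsilon \to 0$: (i) for each matched pair $(s,b) \in M$, the final bid attaching $b$ to $s$ sets $p_s = v_{s,b} - q_b + O(\epsilon)$, giving $p_s + q_b = v_{s,b}$ in the limit; (ii) every currently unmatched seller has never been bid upon, since once a seller receives a first bid she becomes matched and each subsequent bump merely swaps her partner while keeping her matched, so $p_s = 0$; (iii) unmatched buyers satisfy $\lambda_b = 0$ by our convention. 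Summing across the slack-tight constraints yields $\sum_s p_s + \sum_b \lambda_b = \sum_{(s,b) \in M} v_{s,b}$, which equals the primal weight of $M$. By weak LP duality both the primal $M$ and the dual $(p, \lambda)$ are optimal.

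The principal obstacle is making the $\epsilon \to 0$ limit precise in step (i): this is the classical convergence argument for ascending auctions, in which each complete auction terminates $\epsilon$-optimally in the sense of $\epsilon$-complementary slackness and one recovers exact optimality by letting $\epsilon \to 0$, paralleling the equivalent Hungarian-algorithm viewpoint mentioned immediately after Algorithm 4. A secondary subtlety is that the claim is stated at the end of each auction, but should also be seen to propagate across intervening events: when a seller or buyer departs at criticality, the primal and dual variables indexed by that vertex simply drop out (and the departing matched pair, if any, satisfies $p_s + q_b = v_{s,b}$ exactly in the limit), while the arrival of a fresh seller adds no edges to the existing buyers by the constrained-bipartite structure and therefore introduces no new dual constraints. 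Thus optimality at the end of one auction survives until the next auction begins.
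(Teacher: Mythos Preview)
Your proposal is correct and follows essentially the same primal--dual route as the paper: establish dual feasibility and the three complementary-slackness conditions (tightness on matched edges, $p_s=0$ for unmatched sellers, zero margin for unmatched buyers), then invoke weak duality. The only cosmetic difference is that you argue directly from the $\epsilon$-auction pseudocode and pass to the limit, whereas the paper recasts the auction phase as a Hungarian-style alternating-path update (with explicit $\delta$-increments) and reads off CS and monotonicity from that description; both viewpoints are standard and yield the same conclusion.
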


Maintaining a  maximum-weight matching along with optimum dual variables does not guarantee an efficient matching for the whole graph. The dual values are not always feasible for the offline problem. Indeed, the profit margin of some  buyer $b$ may  decrease after some seller departs the market. This is because $b$ may face increasing competition from new buyers, while the bidding process excludes sellers that have already departed the market (whether matched or not).

\begin{proposition}
  \label{prop:dda_factor_2}
  DDA is $\nicefrac{1}{2}$-competitive for constrained bipartite  graphs.
\end{proposition}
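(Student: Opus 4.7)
The plan is a primal-dual argument in the spirit of the proof of Theorem~\ref{th:factor4}, but with dual variables read directly off the algorithm's bookkeeping. For each seller $s$ let $p_s^f$ denote her price at the moment she becomes critical, and for each buyer $b$ let $q_b^a$ denote her profit margin at the end of the auction triggered by her arrival. I will establish three claims: (i) $(p_s^f, q_b^a)$ is feasible for the offline matching dual, (ii) $\sum_s p_s^f \leq \text{DDA}$, and (iii) $\sum_b q_b^a = \text{DDA}$. Combined they yield $\text{OFF} \leq \sum_s p_s^f + \sum_b q_b^a \leq 2\,\text{DDA}$, which is the $\nicefrac{1}{2}$-competitive bound.

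For (i), fix an offline edge $(s,b)$ with $|\sigma(s) - \sigma(b)| \leq d$. The constrained bipartite assumption forces $\sigma(s) < \sigma(b)$, so $s$ is still present when $b$ arrives. At the end of $b$'s arrival auction, Lemma~\ref{lem:mon}(2) certifies that the current $(p,q)$ is an optimal dual on the present market; in particular $p_s + q_b^a \geq v_{sb}$ with $p_s$ evaluated at that moment. Because prices only increase (Lemma~\ref{lem:mon}(1)), $p_s^f$ at the later departure time dominates this earlier $p_s$, and feasibility follows.

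For (ii), when $s$ becomes critical matched to a buyer $b$, complementary slackness at that moment gives $p_s^f + q_b^f = v_{sb}$ with $q_b^f \geq 0$, hence $p_s^f \leq v_{sb}$; unmatched sellers have $p_s^f = 0$ because every DDA price increment is simultaneously coupled with setting the tentative match to the new bidder, so a positive price is always accompanied by a surviving tentative match. Summing over sellers yields the bound. For (iii), I track the running matching value $V_t = \sum_{(s,b) \in M_t} v_{sb}$. In the constrained bipartite setting seller arrivals create no new edges to existing buyers and so leave the current max-weight matching unchanged; matched seller departures decrease $V$ by $v_{s,m(s)}$, exactly what DDA collects; unmatched departures leave $V$ unchanged; and each buyer arrival raises $V$ by $q_b^a$. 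Since $V_0 = V_\infty = 0$, conservation yields $\sum_b q_b^a = \text{DDA}$.

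The subtle step is the last claim of (iii), that the jump in $V_t$ at a buyer arrival is exactly $q_b^a$. The bipartite LP after adding $b$ admits many optimal duals, and in general the new buyer's $q$ in an arbitrary optimum can be strictly smaller than the marginal value $\Delta V = V^*_{\text{post}} - V^*_{\text{pre}}$. But DDA's auction starts from the existing prices and raises them only by $\epsilon$ at a time along the bidding chain, so in the $\epsilon \to 0$ limit it selects the buyer-optimal dual --- the one that minimizes $\sum_s p_s^{\text{post}} + \sum_{b' \neq b} q_{b'}^{\text{post}}$ --- and for that dual one has $q_b^a = \Delta V$. Once this invariant is in place, conservation closes the argument and yields $\text{OFF} \leq 2\,\text{DDA}$.
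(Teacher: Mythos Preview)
Your proof is correct and follows essentially the same primal-dual route as the paper: your feasible dual $(p_s^f, q_b^a)$ is precisely the paper's $(p^f, q^i)$, and your conservation identity $\sum_b q_b^a = \text{DDA}$ obtained by tracking $V_t$ is equivalent to the paper's Claim~\ref{cl:conservation}. The one place the paper is more explicit is your ``subtle step'' that $q_b^a=\Delta V$; rather than invoking a buyer-optimal-dual selection, the paper proves directly (Claim~\ref{cl:monotonicity}, via the Hungarian update) that each dual adjustment raises prices on exactly as many red sellers as it lowers margins on old blue buyers, so $\sum_{s\in S_t} p_s + \sum_{b'\in B_t} q_{b'}$ is preserved and hence $q_b^a = V^*_{\text{post}} - V^*_{\text{pre}}$.
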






\begin{proof} The proof follows the primal-dual framework.

First, we observe that by complementary slackness, any seller $s$ (buyer $b$) that departs unmatched has a final price $p_s^f = 0$ (final profit margin $q_b^f = 0$). When a seller $s$ is critical and matches to $b$, we have  $v_{s,b} = p_s^f + q_b^f$. Therefore, \emph{DDA} collects a reward of $\mathcal{A} = \sum_{s \in S} p_s^f + \sum_{b \in B} q_b^f$.

Second, let us consider a buyer $b$ and a seller $s \in [b - d, b)$ who has arrived before $b$ but not more than $d$ steps before. Because sellers do not finalize their matching before they are critical, we know that $s \in S_b$. An ascending auction may be triggered at the time of $b$'s arrival, after which we have: $v_{s,b} \leq p_s(b) + q_b(b) \leq p_s^f + q_{b}^i$, where the second inequality follows from the definition that $q_b(b) = q_b^i$ and from the monotonicity of sellers' prices (Lemma \ref{lem:mon}). Thus, $(p^f, q^i)$ is a feasible solution to the offline dual problem.

Finally, we observe that upon the arrival of a new buyer, the ascending auction does not change the sum of prices and margins for vertices who were already present:

\begin{claim}
  \label{cl:monotonicity}
  Let $b$ be a new buyer in the market, and let $p, q$ be the prices and margins before $b$ arrived, and let $S_t$ and $B_t$ be the set of sellers and buyers present before $b$ arrived. Let $p'$, $q'$ be the prices and margins at the end of the ascending auction phase (Step 2(a) in Algorithm 1). Then:
\begin{equation}
\sum_{s \in S_t} p_s + \sum_{b \in B_t} q_b = \sum_{s \in S_t} p'_s + \sum_{b \in B_t} q'_b.
\label{eq:conservation}
\end{equation}
\end{claim}
By applying this equality iteratively after each arrival, we can relate the initial margins $q^i$ to the final margins $q^f$ and prices $p^f$:
\begin{claim}
  \label{cl:conservation}
  $\sum_{s \in S} p_s^f + \sum_{b \in B} q_b^f = \sum_{b \in B} q_b^i$.
\end{claim}

This completes the proof of Proposition \ref{prop:factor2} given that the offline algorithm achieves at most:
$$\mathcal{O} \leq \sum_{s \in S} p_s^f + \sum_{b \in B} q_b^i \leq 2 \mathcal{A}.$$

\end{proof}

It remains to prove Claims \ref{cl:monotonicity} and \ref{cl:conservation}.

\begin{proof}[Proof of Claim \ref{cl:monotonicity}]
  The proof of termination in \cite{bertsekas1988auction} relies on the introduction of a minimum bid $\epsilon$ in step $6$ of the auction algorithm to ensure that the algorithm does not get stuck in a cycle of bids of $0$. In the limit where $\epsilon \rightarrow 0$, the algorithm ressembles the \emph{hungarian algorithm} \cite{kuhn1955hungarian}. The idea is to search for an augmenting path along the edges for which the dual constraint is tight. If such a path is found, the matching is augmented, otherwise we perform simultaneous bid increases in way that ensures that prices $p$ and margins $q$ are still dual feasible.

  We assume that we are given at time $t$ an optimal matching $m$ and optimal duals $(p, q)$ corresponding to the graph with vertices $S_t, B_t$. We assume that we added a new vertex $b^*$ to $B'_{t} = B_t \cup \{b^*\}$, and that we initialized $q_{b^*} = \max_{s \in S_t} v_{s, b^*} - p_s$ 

  Initialize $m' = m$, $p' = p$, $q' = q$. Note that primal and dual feasibility are satisfied. Therefore, $(m', p', q')$ is optimal iff the following three complementary slackness condition are satisfied:
\begin{equation}
  \label{eq:CS:dual}
  \forall s \in S_t, v_{s,m(s)} = p'_s + q'_m(s).
  \tag{CS1}
\end{equation}
\begin{equation}
  \label{eq:CS:primal}
  \forall s \in S'_t, m(s) = \emptyset \implies p_s = 0.
  \tag{CS2}
\end{equation}
\begin{equation}
  \label{eq:CS:primal2}
  \forall b \in B'_t, m(b) = \emptyset \implies q_b = 0.
  \tag{CS3}
\end{equation}
  Note that \eqref{eq:CS:dual} and \eqref{eq:CS:primal} are already satisfied. If $q'_{b^*} = 0$ then \eqref{eq:CS:primal2} is also satisfied and we have an optimal solution.

  Suppose now that $q'_{b^*} > 0$. We will update $(m', p', q')$ in a way that maintains primal and dual feasibility, as well as \eqref{eq:CS:dual} and \eqref{eq:CS:primal}.

  Our objective is to find an augmenting path in the graph. First we will start by trying to find an alternating path that starts on $b$ and only uses edges for which the dual constraint is tight: $\mathcal{E} = \{(s,b) | s \in S'_t, b \in B'_t,  v_{s,b} = p'_s + q'_b\}$. Observe that by \eqref{eq:CS:dual} all the matched edges in $m$ are in $\mathcal{E}$. We will now successively color vertices as follows:
  \begin{itemize}
    \item[0.] Start by coloring $b^*$ in blue.
    \item[1.] For any blue buyer $b$, for any seller $s$ such that $(s, b) \in \mathcal{E}$ and $s \neq m(b)$, we color $s$ in red.
    \item[2.] For any red seller $s$, let $b = m(s)$, then color $b$ in blue.
  \end{itemize}

  Observe that there is an alternating path between $b^*$ and any red seller.
  If at one point we color an unmatched seller $s^*$ in red, this means that we have found an augmenting path from $b^*$ to $s^*$ that only utilizes edges in $\mathcal{E}$. In that case, we change $m'$ according to the augmenting path. Because of the way we chose edges in $\mathcal{E}$, \eqref{eq:CS:dual} is still satisfied. \eqref{eq:CS:primal} and \eqref{eq:CS:primal2} are now also satisfied, which means we have an optimal solution $(m', p',q')$.

  We terminate when we are unable to color vertices any further. In that case, let us define $\delta_1 = \min_{b \text{ blue}} q_b$. If $\delta_1 = 0$, then there exists $b \in B_{t}'$ with $q_b = 0$ and an alternating path form $b^*$ to $b$. We update $m'$ according to that path, and verify that all CS conditions are now satisfied.

  Suppose that $\delta_1 > 0$. Define
  \begin{equation}
    \delta_2 = \min_{b \text{ blue, } s \text{ not red}} \{ p_s + q_b - v_{s,b} \}.
    \label{eq:delta2}
  \end{equation}

  The fact that we cannot color any more vertices implies that $\delta_2 > 0$.
  Let $\delta = \min(\delta_1, \delta_2) > 0$. For every red seller $s$, we update $p'_s \leftarrow p'_s + \delta$. For every blue buyer $b$, we update $q'_b \leftarrow q'_b - \delta$. Observe that dual feasibility is still verified, as well as \eqref{eq:CS:dual}.

  If $\delta = \delta_2$, taking $(s,b)$ the argmin in \eqref{eq:delta2}, we now have such that $p'_s + q'_b - v_{s,b} = 0$ which means we can add $(s,b)$ to $\mathcal{E}$ and color $s$ in red.
  We will eventually have $\delta = \delta_1$, and this leads to $q_b = 0$ and we can terminate.
  This proves both the termination and correctness. Furthermore, monotonicity of the dual variables is also straightforward. Let us now prove the conservation property:
  \begin{equation}
    \label{eq:proof:conservation}
    \sum_{s \in S_t} p_s + \sum_{b \in B_t} q_b = \sum_{s \in S_t} p'_s + \sum_{b \in B} q'_b.
  \end{equation}
  Note that when we update the dual variables, then every seller we colored in red was matched in $S'$ and we colored that match in blue. Therefore, apart from the initial vertex $i$, there are the same number of red and blue vertices.
\end{proof}

\begin{proof}[Proof of Claim \ref{cl:conservation}]
The idea of the proof is to iteratively apply the result of Claim \ref{cl:monotonicity} after any new arrival.
Let $\widetilde{S}_t$ (resp. $\widetilde{B}_t$) be the set of sellers (buyers) who have departed, or already been matched before time $t$. We show by induction over $t \leq T$ that:

\begin{equation}
  \sum_{s \in \widetilde{S}_t} p_s^f + \sum_{b \in \widetilde{B}_t} q_b^f + \sum_{s \in S_t} p_s(t) +  \sum_{b \in B_t} q_b(t) = \sum_{b \in \widetilde{B}_t} q_b^i +  \sum_{b \in  B_t} q_b^i.
  \label{eq:proof:balance}
\end{equation}
This is obvious for $t = 1$. Suppose that it is true for $t \in [1, T-1]$. Note that departures do not affect \eqref{eq:proof:balance}. If the agent arrivint at $t + 1$ is a seller, then for all other sellers $s$, $p_s(t+1) = p_s(t)$ and for all buyers $b$, $q_b(t+1) = q_b(t)$, thus \eqref{eq:proof:balance}, is clearly still satisfied. Suppose that vertex $t + 1$ is a buyer. Using equation \eqref{eq:conservation}, we have:

$$\sum_{s \in S_{t+1}} p_s(t+1) +  \sum_{b \in B_{t+1}} q_b(t+1) = q_{t+1}(t+1) + \sum_{b \in B_t} q_b(t) + \sum_{s \in S_t} p_s(t) = \sum_{b \in  B_{t+1}} q_b^i.$$

Note that at time $T + d$, every vertex has departed. Thus, $\widetilde{S}_{T+d} = S$, $\widetilde{B}_{T+d} = B$ and $S_{T+d} = B_{T+d} = \emptyset$.
This enables us to conclude our induction and the proof for \eqref{eq:proof:balance}.

\end{proof}

\section{Proofs for random arrival order}
\label{app:ro_proofs}

 \begin{proof}[Proof of Proposition \ref{prop:cycle}]
 When $n$ is not a multiple of $p$, let  $v \in [1, p-1]$ be the remainder of the euclidian division of $n$ by $p$, and $u$ be such that  $n = pu + v$.
 Let $\{\sigma_1, ..., \sigma_K\}$ be a $p$-periodic $(\alpha, d)$-cover of $C_{n_1}^d$ with associated weights $\{\lambda_1, ..., \lambda_K\}$. We will show that it can be extended into an $(\alpha \left(\nicefrac{u}{u-2}\right), d)$-cover of $C_n^d$.

   We  set $\widetilde{\sigma}_{k}$ to be the $p$-periodic permutation over $1, ..., pu$ such that for all $i \in [1, p]$, $\widetilde{\sigma}_k(i) = \sigma_k(i)$.
  Let $x$ be an integer in the interval  $[1, u+1]$. Define the permutation $\sigma'_{k,x}$ as follows:
  \begin{equation}
  \begin{aligned}
  \sigma'_{k,x}(i)=
  \begin{cases}
  \widetilde{\sigma}_{k}(i) & i \leq px \\
   i + (u-x)p &  i \in [px+1, px+1 + v] \\
   \widetilde{\sigma}_{k}(i - v)  & i > px+1 + v.
  \end{cases}
  \end{aligned}
  \end{equation}

Take $i',j' \in [1, px] \cup [px+1+v, n]$ such that $|i' - j' | \leq d$.
  Because $n_1>p$ is a multiple of $p$, there exist $i, j \in [1, n_1]$ such that $i \equiv i' \mod p$, $j \equiv j' \mod p$ and $| i - j| \leq d $. By $p$-periodicity of $\sigma_k$ and $\sigma'_k$, we know that edge $(i',j')$ is in $B_n^{d}(\sigma'_k)$ iff $(i,j)$ is in $B_{n_1}^{d}(\sigma_k)$. Thus we can conclude that $\sum_k \lambda_k B_n^d(\sigma'_{k,x})$ covers edge $(i', j')$ of $C_n^d$.
  
  Every edge is therefore covered for at least $u-2$ different values of $x$. Therefore, $\sum_k \sum_x \frac{u}{u-2} \lambda_k B_n^d(\sigma'_{k,x})$ covers $C_n^d$. This means that $\left(\sigma_{k,x}\right)_{k, x}$ is an $\left(\alpha \left(\nicefrac{u}{u-2}\right), d\right)$-cover of $C_n^d$.

\end{proof}

\begin{proof}[Proof of Proposition \ref{prop:non_multiples} in case (ii)]
  Suppose now that $d + 1 = ku + v$ with $1 \leq v < k$. We first select vertices in the following way: select a subset $\Phi \subset [1, d+1]$ of $d + 1 - v$ vertices uniformly at random. Take $\Delta = \Phi + ku[1, r-1] = \{a + k u b | a \in \Phi, b \in [1, r-1] \}$ and note that $| \Delta | = kur$.
  
  We now contract vertices in $\Delta$. This is the same as in Definition \ref{def:contraction}: for $t \in [1, u]$, $a_t$ is the set of $u$ smallest vertices of $\Delta$ that are not in $a_1 \cup ... \cup a_{t-1}$. Because $d+1-v$ is a multiple of $u$, we have $a_{i+k}$ = $a_i + (d+1)$. This implies that the contracted graph is $C_{n/u}^{(d+1)/u}$.
  
  Similarly to the proof of case $(i)$, we extend a cover for $C_{n/u}^{(d+1)/u}$ to cover every edge $(i,j)$ for $i,j \in \Delta$. 
  If we sum over all the possible ways to select subset $\Phi$, we note that every edge $(i,j) \in C_{r(d+1)}^d$ is covered with probability at least $\left(\frac{d+1-v}{d+1}\right)^2$.
  \end{proof}

\section{Random arrival order: numerical values for small $d$}
\label{app:numerical}

Solving \eqref{eq:LP} and \eqref{eq:LP'} can become computationally difficult given the increase in the number of constraints, which is exponential in $d$.
In Table \ref{tab:small_d}, we show numerical values of solutions $\alpha_d$ and $\alpha'_d$ for $d$ between $2$ and $13$.
\begin{table}[ht!]
  \centering
\begin{tabular}{ |l ||l |l |l |l |l |l |l | l | l | l | l| l | l |}
  \hline
  $d$ & 1 & 2 & 3 & 4 & 5 & 6 & 7 & 8 & 9 & 10 & 11 & 12 & 13 \\ \hline
  $\alpha_d$ & 2 & 2.33 & 2.5 & 2.64 & 2.71 & 2.75 & 2.79 & 2.83 & $2.99^*$ & $3.2^*$ & $3.11^*$ & & $3.23^*$ \\ \hline
 $\alpha'_d$ &  & 4 & 3.45 & 3.17 & 3.15 & 3.12  & 3.09 & 3.08 & $3.07$ & $3.20^*$ & $3.153^*$ & $3.264^*$ & $3.318^*$ \\ \hline
\end{tabular}
  \caption{Numerical values for $\alpha_d$ and $\alpha'_d$ for small values of $d$. Starred elements were solved approximately (are therefore upper bounds on the actual value).}
  \label{tab:small_d}
\end{table}

We now need to provide upper bounds for all $d$ between $14$ and $51$. Note first that if $d+1$ is a multiple of $k$, then Proposition \ref{prop:non_multiples} implies that $\alpha_d \leq \alpha'_k$. Therefore, we need only consider prime values for $d$. In table \ref{tab:primes}, we compute upper bounds for $\alpha_d$ using Proposition \ref{prop:non_multiples}. For each case, we report which value of $k$ we used, as well as the value $\alpha'_k \left(\nicefrac{(d + 1 - v)}{(d+1)}\right)^2$. 

This allows us to conclude that Batching is $0.279$-competitive.

\begin{table}[ht!]

  \centering
\begin{tabular}{ |l ||l |l |l |l |l |l |l |l |l |}
  \hline
  $d$ & 17 & 19 & 23 & 29 & 31 & 37 & 41 & 43 & 47\\ \hline
  $\alpha_d \leq $ & 3.58 & 3.48 & 3.44 & 3.31 & 3.36 & 3.30 & 3.31 & 3.24 & 3.35\\ \hline
  $k$ used & 4 & 6 & 11 & 7 & 5 & 6 & 5 & 7 & 9 \\ \hline
\end{tabular}
  \caption{Upper bounds for $\alpha_d$ for prime values of $d$; derived from  Proposition \ref{prop:non_multiples} using the following formula: $\alpha_d \leq \alpha_k \left(\frac{d+1-v}{d+1}\right)^2 $ for $k \leq d$ and $v = d \text{ mod } k$ .}
  \label{tab:primes}
\end{table}

\end{document}